\DeclarePairedDelimiter\norm{\lVert}{\rVert}
\newcommand{\xbf}{\mathbf{x}}
\newcommand{\ybf}{\mathbf{y}}
\newcommand{\cbf}{\mathbf{c}}
\newcommand{\fbf}{\mathbf{f}}
\newcommand{\Xbf}{\mathbf{X}}
\newcommand{\etol}{\epsilon_{\mathrm{tol}}}
\newcommand{\CC}{\mathbb{C}}
\DeclareMathOperator*{\argmin}{arg\,min}
\newcommand{\D}{\mathcal{D}}
\newcommand{\xt}{\mathbf{X}^{(t)}}
\newcommand{\xtp}{\mathbf{X}^{(t+1)}}
\newcommand{\Scal}{\mathcal{S}}
\newcommand{\epst}{\varepsilon_{t}}
\newcommand{\epstp}{\varepsilon_{t+1}}
\newcommand{\Chen}[1]{\textcolor{red}{[Chen: #1]}}
\begin{document}
	\title{A Learnable Variational Model for Joint Multimodal MRI Reconstruction and Synthesis\thanks{This work was supported in part by National Science Foundation under grants DMS-1818886, DMS-1925263, DMS-2152960 and DMS-2152961.}}
	%
	\titlerunning{Learnable Variational Model}
	%
	\author{Wanyu Bian\inst{1} \and 
		Qingchao Zhang\inst{1} \and
		Xiaojing Ye\inst{2} \and
		Yunmei Chen\inst{1}}
	%
	\authorrunning{W. Bian et al.}
	%
	\institute{University of Florida, Gainesville FL 32611, USA\\
		\email{ \{wanyu.bian, qingchaozhang, yun\}@ufl.edu} \and
		Georgia State University, Atlanta GA 30302, USA\\
		\email{xye@gsu.edu}}
	\maketitle              
	\begin{abstract}
		Generating multi-contrasts/modal MRI of the same anatomy enriches diagnostic information but is limited in practice due to excessive data acquisition time. In this paper, we propose a novel deep-learning model for joint reconstruction and synthesis of multi-modal MRI using incomplete k-space data of several source modalities as inputs. The output of our model includes reconstructed images of the source modalities and high-quality image synthesized in the target modality.
		Our proposed model is formulated as a variational problem that leverages several learnable modality-specific feature extractors and a multimodal synthesis module. We propose a learnable optimization algorithm to solve this model, which induces a multi-phase network whose parameters can be trained using multi-modal MRI data. Moreover, a bilevel-optimization framework is employed for robust parameter training. We demonstrate the effectiveness of our approach using extensive numerical experiments.
		
		\keywords{MRI Reconstruction  \and Multimodal MRI Synthesis  \and Deep Neural Network \and Bilevel-Optimization.}
	\end{abstract}
	\section{Introduction}
	Magnetic resonance imaging (MRI) is a prominent leading-edge medical imaging technology which provides diverse image contrasts under the same anatomy. Multiple different contrast images are generated by varying the acquisition parameters, e.g. T1-weighted (T1), T2-weighted (T2) and Fluid Attenuated Inverseion Recovery (FLAIR). They have similar anatomical structure but highlight different soft tissue  which enriches the diagnostic information for clinical applications and research studies comparing to single modality \cite{iglesias2013synthesizing}.  
	A major limitation of MRI is its relatively long data acquisition time during MRI scans. It does not only cause patient discomfort, but also makes MR images prone to motion artifacts which degrade the diagnostic accessibility. A mainstream routine to reduce the MRI acquisition time is to reconstruct partially undersampled k-space acquisitions, another approach is to synthesize target modality MR image from fully-sampled acquisitions of source modality images \cite{yang2020model,dar2020prior}.  
	
	Compressed sensing MRI (CS-MRI) reconstruction is a predominant approach for accelerating MR acquisitions, which solves an inverse problem formulated as a variational model.  In recent decades, deep learning based models have leveraged large datasets and further explored the potential improvement of reconstruction performance. Most of the deep learning based reconstruction methods employ end-to-end deep networks \cite{yang2017dagan,sriram2020end,lee2018deep}. To overcome the weakness of the end-to-end black-box networks,  several learnable optimization
	algorithms (LOAs) have been developed attracted much attention, which possess of a more interpretable network architecture. LOA-based reconstruction methods unfold the iterative optimization algorithm into a multi-phase network in which the regularization and image transformation are learned to improve the network performance \cite{monga2021algorithm,jimaging7110231,chen2021learnable,bian2022optimal,pmlr-v119-bao20b}, e.g. ADMM-Net \cite{sun2016deep}, ISTA-Net$^+$ \cite{zhang2018ista} and PD-Net \cite{cheng2019model}. 
	
	Multimodal MR image synthesis in recent years has emerged using various deep learning frameworks, where a main stream is to start with source modalities from the image domain and synthesize the images of the target modalities \cite{dar2019image,sohail2019unpaired,welander2018generative}. For instance, GAN-based methods are mostly end-to-end from images to images with encoder-decoder architectures in their generator networks. MM-GAN \cite{sharma2019missing} channel-wisely concatenates all the available modalities with a zero image for missing modality and imputes the missing input incorporating curriculum learning for GAN.  Multimodal MR (MM) \cite{liu2020multimodal}, MMGradAdv \cite{chartsias2017multimodal} and Hi-Net \cite{zhou2020hi} exploit the correlations between multimodal data and apply robust feature fusion method to form a unified latent representation. A rarely explored approach \cite{dar2020prior} starts from  undersampled  k-space data of the source modalities to generate target modality images. This paper further explored this approach  and the major differences from \cite{dar2020prior} to ours are: (i) In \cite{dar2020prior} it requires that the target modality is heavily undersampled and the source modality is lightly undersampled, while our method does not require any of the k-space information of target modality nor the source modality to be lightly undersampled which is much less limitations in real-world applications; (ii) Instead of learning the mapping from image to image, we learn the mapping from the features of source images to the target image since features provide more direct information to synthesize images of a new modality and (iii) We formulate the joint reconstruction and synthesis problem in a variational model and propose a convergent algorithm as the architecture of the deep neural network so that the network is interpretable and convergent.

	In order to synthesize target modality by using partially scanned k-space data from source modalities in stead of fully scanned data that used in the state-of-the-art multimodal synthesis, in this paper, we propose to jointly reconstruct undersampled multiple source modality MR images and synthesize the target modality image. Our contributions are summarized as follows: (1) We propose a novel LOA for joint multimodal MRI reconstruction and synthesis with theoretical analysis guarantee;
	(2) The parameters and hyper-parameters of the network induced by our LOA are learned using a bilevel optimization algorithm robust parameter training;
	(3) Extensive experimental results demonstrate the efficiency of the proposed method and high quality of the reconstructed/synthesized images.

	We demonstrate that our proposed joint synthesis-reconstruction network can further improve image reconstruction quality over existing sole reconstruction networks using the same partial k-space measurements. This improvement is due to the additional image feature information provided by the synthesis functionality of our network, which is trained by comprehensive image data of all relevant modalities together. Moreover, the synthesized images can serve as additional references to radiologists when the corresponding k-space data cannot be acquired in practice.

	\section{Proposed Method}
	\subsection{Model}
	In this section, we provide the details of the proposed model for joint MRI reconstruction and synthesis. Given the partial k-space data $\{ \fbf_1, \fbf_2 \}$ of the source modalities (e.g. T1 and T2), our goal is to reconstruct the corresponding images $\{\xbf_1, \xbf_2\}$ \emph{and} synthesize the image $\xbf_3$ of the missing modality (e.g. FLAIR) without its k-space data. 
	To this end, we propose to learn three modality-specific feature extraction operators $\{h_{w_i}\}_{i=1}^{3}$, one for each of these three modalities. Then, we design regularizers of these images by combining these learned operators and a robust sparse feature selection operator (we use $(2,1)$-norm in this work). 
	To synthesize the image $\xbf_3$ using $\xbf_1$ and $\xbf_2$, we employ another feature-fusion operator $g_{\theta}$ which learns the mapping from the features $h_{w_1}(\xbf_1)$ and $h_{w_2}(\xbf_2)$ to the image $\xbf_3$.
	Our proposed model reads as
	%
	\begin{equation}
		\label{our_model}
		\begin{aligned}
			\min_{\xbf_1, \xbf_2, \xbf_3}\Psi_{\Theta, \gamma}(\xbf_1, \xbf_2, \xbf_3) & :=  \textstyle{\frac{1}{2} \sum\limits_{i = 1}^{2}  \| P_i F \xbf_i  - \fbf_i\|_2^2 + \frac{1}{3} \sum\limits_{i = 1}^{3}   \|h_{w_i} (\xbf_i) \|_{2,1}}  \\
			&\quad + \textstyle{\frac{\gamma}{2} \|g_{\theta} ([h_{w_1}(\xbf_1),h_{ w_2} (\xbf_2)]) - \xbf_3\|_2^2},
		\end{aligned}
	\end{equation}
	where the first term in \eqref{our_model} is the data fidelity for the source modalities that ensures consistency between the reconstructed images $\{\xbf_1, \xbf_2\}$ and the sensed partial k-space data $\{ \fbf_1, \fbf_2 \}$. Here, $F$ stands for the discrete Fourier transform and $P_i$ is the binary matrix representing the k-space mask when acquiring data for $\xbf_i$. In \eqref{our_model}, $h_{w_i}$ represents the modality-specific feature extraction operator which maps the input $\xbf_i \in \CC^n$ to a high-dimensional feature tensor $h_{w_i} (\xbf_i) \in \CC^{m \times d}$, where $m$ is the spatial dimension and $d$ is the channel number of the feature tensor. The second term is the regularization of all modalities $\{\xbf_1, \xbf_2, \xbf_3\}$ to enhance sparsity of the their feature tensors, where 
	$\label{eq:r}
	\|h_{w_i}(\xbf_i)\|_{2,1} = \sum_{j = 1}^{m} \|h_{w_i, j}(\xbf_i)\|.
	$
	Here each $h_{w_i, j}\in \mathbb{R}^d$ can be viewed as a feature vector at spatial location $j$.
	The last term in \eqref{our_model} is to synthesize $\xbf_3$ by learning a mapping $g_{\theta} : \CC^{m \times 2d} \rightarrow \CC^{n}$ that maps the concatenated features of $\xbf_1$ and $\xbf_2$ (i.e. $[h_{w_1}(\xbf_1),h_{ w_2} (\xbf_2)]$) to
	$\xbf_3$, and  $[\cdot,\cdot]$ represents the concatenation of the arguments.
	Here $g_{\theta}$ maps features of $x_1$ and $x_2$ to the target  image so that more useful information can be used to generate the target image comparing to the mappings from source images to the target image.

	In our implementation, we parameterize the modality-specific feature extraction operator $h_{w_i}$ and the synthesis mapping $g_{\theta}$ as vanilla CNNs with $l$ and  $l'$ layers respectively, both of which use the smoothed rectified linear unit \cite{chen2021learnable} as activation. 
	For notation simplicity, we let $\Theta$ in \eqref{our_model} denote the collection of all parameters in the convolution operators of the function $\Psi$, i.e. $\Theta = \{w_1, w_2, w_3, \theta\} $. 
	
	The weight $\gamma$ is a hyper-parameter which plays a critical role in balancing the reconstruction part (first two terms in \eqref{our_model}) and the image synthesis part (last term in \eqref{our_model}) of the model \eqref{our_model}, and hence has significant impact to the final image reconstruction and synthesis quality. To address this important issue, we propose to use a bi-level hyper-parameter tuning framework to learn $\gamma$ by minimizing the reconstruction loss on both validation and training data sets. Details of this hyper-parameter tuning will be provided in Section \ref{section: bilevel}.
	\subsection{Efficient Learnable Optimization Algorithm}
	\label{loa induced net}
	
	In this section, we present a novel and efficient learnable optimization algorithm (LOA) for solving the nonconvex nonsmooth minimization problem \eqref{our_model}. (Comprehensive convergence analysis of this algorithm is provided in Supplementary Material.) Then we design a DNN whose architecture exactly follows this algorithm, and the parameters of the DNN can be learned from data. In this way, the DNN inherits all the convergence properties of the LOA.
	
	Since the second sum in the minimization problem \eqref{our_model} is nonsmooth due to the $l_{2, 1}$-norm, we first approximate these nonsmooth terms using their smooth surrogates $
	\|h_{w_i}(\xbf_i)\|_{\varepsilon_{2,1}} = \sum\nolimits^m_{j=1} \big( \sqrt{\| h_{w_i, j}(\xbf_i) \|^2 + \varepsilon^2} -\varepsilon \big)
	$, where $\varepsilon > 0$ is the parameter representing the smoothing level. 
	Thus, for every fixed $\varepsilon$, we obtain a smooth surrogate function $\Psi_{\Theta,\gamma}^{\varepsilon}$ of the nonsmooth objective $\Psi_{\Theta,\gamma}$, and we can apply a gradient descent step to update our approximation to the solution of \eqref{our_model}.
	In our algorithm, the smoothing level $\varepsilon$ is automatically reduced and tends to 0, such that the surrogate approaches the original nonsmooth regularizers in \eqref{our_model}.
	More precisely, let $\Xbf = \{\xbf_1, \xbf_2, \xbf_3 \}$ for notation simplicity, then we solve the problem $\min_{\Xbf}\Psi_{\Theta, \gamma}(\Xbf)$ with initial $\Xbf^{(0)}$ using Algorithm \ref{alg:lda} (the initial $\Xbf^{(0)}$ is obtained from a pre-trained initial network, which is illustrated in detail in Section \ref{Initialization}). At Line \ref{line_search} of Algorithm \ref{alg:lda}, we compute a gradient descent update with step size obtained by line search while the smoothing parameter $\epst > 0$ is fixed. In Line \ref{reduction_cre}, the algorithm updates $\epst$ based on a reduction criterion.
	The reduction of $\epst$ ensures that the specified subsequence (the subsequence who met the $\epst$ reduction criterion) must have an accumulation point that is a Clarke stationary point \cite{chen2021learnable} of the optimization problem \eqref{our_model}, as given in Theorem \ref{theorem}, whose proof is provided in Supplementary Materials. 
	%
	We create a multi-phase network whose architecture exactly follows Algorithm \ref{alg:lda} with a prescribed phase number $\hat{T}$, where each phase of the network performs one iteration of the algorithm.
	\begin{theorem}
		Suppose that $\{\xt \}$ is the sequence generated by Algorithm \ref{alg:lda} with any initial $\Xbf^{(0)}$, $\etol=0$ and $T=\infty$. Let $ \{ \Xbf^{(t_l+1)}\}$ be the subsequence that satisfies the reduction criterion  in step \ref{reduction_cre} of Algorithm \ref{alg:lda}. Then $ \{ \Xbf^{(t_l+1)}\}$ has at least one accumulation point, and every accumulation point of $\{ \Xbf^{(t_l+1)}\}$ is a Clarke stationary point of $ \min_{\Xbf} \Psi_{\Theta, \gamma}(\Xbf)$.
		\label{theorem}
	\end{theorem}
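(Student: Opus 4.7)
The plan is to mimic the standard convergence architecture for smoothing gradient methods with adaptive smoothing (as used in Chen et al.~\cite{chen2021learnable}), adapted to the specific smoothed surrogate $\Psi_{\Theta,\gamma}^{\varepsilon}$ of $\Psi_{\Theta,\gamma}$. First I would record the basic analytic properties of the smoothed objective: for each fixed $\varepsilon>0$ the function $\Psi_{\Theta,\gamma}^{\varepsilon}$ is continuously differentiable with Lipschitz gradient (because $\phi_{\varepsilon}(z):=\sqrt{\|z\|^{2}+\varepsilon^{2}}-\varepsilon$ is $C^{1}$ with $1/\varepsilon$-Lipschitz gradient, and $h_{w_i}$, $g_{\theta}$ are compositions of convolutions with the smoothed ReLU used in~\cite{chen2021learnable}, hence $C^{1}$ with Lipschitz gradient); and uniformly in $\varepsilon$ one has $\Psi_{\Theta,\gamma}^{\varepsilon}(\Xbf)\le \Psi_{\Theta,\gamma}(\Xbf)\le \Psi_{\Theta,\gamma}^{\varepsilon}(\Xbf)+C\varepsilon$ for an absolute constant $C$ depending only on $m$ and the three modalities. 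Together with the coercivity of $\Psi_{\Theta,\gamma}$ (the $\xbf_{3}$ direction is controlled by the quadratic synthesis term, and the $\xbf_{1},\xbf_{2}$ directions by combining the data-fidelity with the $(2,1)$-regularisation, under the mild assumption that the learned feature maps $h_{w_i}$ do not collapse), this makes the whole sequence $\{\xt\}$ bounded, which yields existence of at least one accumulation point.

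Next I would establish the two monotonicity facts that drive the convergence. The Armijo-type line search in Line~\ref{line_search} guarantees, for each fixed $\epst$, the sufficient decrease
\[
\Psi_{\Theta,\gamma}^{\epst}(\xt)-\Psi_{\Theta,\gamma}^{\epst}(\xtp)\ \ge\ c\,\taut\,\|\nabla \Psi_{\Theta,\gamma}^{\epst}(\xt)\|^{2},
\]
and the Lipschitz-gradient bound above ensures that $\taut$ stays bounded away from $0$ while $\epst$ is not reduced, so only finitely many iterations can occur between two consecutive reductions without driving $\|\nabla \Psi_{\Theta,\gamma}^{\epst}(\xt)\|$ below the reduction threshold used in step~\ref{reduction_cre}. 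From this I would conclude that the reduction criterion is triggered infinitely often, i.e.\ the subsequence $\{\Xbf^{(\tl+1)}\}$ is infinite and $\epst\to 0$ along it, and simultaneously $\|\nabla \Psi_{\Theta,\gamma}^{\varepsilon_{\tl}}(\Xbf^{(\tl+1)})\|\to 0$ by the way the threshold in step~\ref{reduction_cre} is chosen (typically of the form $\sigma\epst$ with $\epst\downarrow 0$).

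The final step passes from smoothed stationarity to Clarke stationarity. Let $\Xbf^{*}$ be any accumulation point of $\{\Xbf^{(\tl+1)}\}$, and take a subsequence along which $\Xbf^{(\tl+1)}\to\Xbf^{*}$ and $\epst\downarrow 0$. The gradient $\nabla\Psi_{\Theta,\gamma}^{\epst}$ differs from a selection in the Clarke subdifferential $\partial\Psi_{\Theta,\gamma}$ only in the smoothed $(2,1)$-norm terms, since the data-fidelity and synthesis terms are already $C^{1}$. For each spatial location $j$, the smoothed gradient
\[
\nabla_{h}\phi_{\epst}\bigl(h_{w_i,j}(\xbf_i^{(\tl+1)})\bigr)\ =\ \frac{h_{w_i,j}(\xbf_i^{(\tl+1)})}{\sqrt{\|h_{w_i,j}(\xbf_i^{(\tl+1)})\|^{2}+\epst^{2}}}
\]
is bounded by $1$ and, by passing to a further subsequence, converges to some vector in the Clarke subdifferential of $\|\cdot\|$ at $h_{w_i,j}(\xbf_i^{*})$ (equal to the normalised vector when the argument is nonzero, and to a vector in the closed unit ball when it is zero). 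Composing with the smooth chain rule through $h_{w_i}$ and $g_{\theta}$ and using the closed-graph property of the Clarke subdifferential, one obtains $0\in\partial\Psi_{\Theta,\gamma}(\Xbf^{*})$, which is precisely Clarke stationarity. The main obstacle, and the step where I would spend the most care, is this last limit passage: one must verify that the reduction rule in step~\ref{reduction_cre} is strong enough to force $\|\nabla \Psi_{\Theta,\gamma}^{\epst}(\Xbf^{(\tl+1)})\|\to 0$ (and not merely bounded) while simultaneously $\epst\downarrow 0$, and that the smoothed normalising vectors have limits compatible with the Clarke subdifferential at the accumulation point, both of which ultimately rely on the precise form of the smoothing $\phi_{\varepsilon}$ and the reduction threshold used in Algorithm~\ref{alg:lda}.
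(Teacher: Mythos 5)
Your overall route is the same as the paper's: smooth the $(2,1)$-terms, use the sandwich bound $\Psi_{\Theta,\gamma}^{\varepsilon}\le\Psi_{\Theta,\gamma}\le\Psi_{\Theta,\gamma}^{\varepsilon}+m\varepsilon$, show the reduction criterion fires infinitely often so that $\varepsilon_{t_l}\to 0$ and $\|\nabla\Psi_{\Theta,\gamma}^{\varepsilon_{t_l}}(\Xbf^{(t_l+1)})\|\le\sigma\eta\varepsilon_{t_l}\to 0$, and then pass to the limit using the closedness of the Clarke subdifferential (the paper delegates this last limit passage to its cited prior work, while you sketch it directly via the bounded smoothed normalising vectors; that part is fine and essentially equivalent).

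There is, however, one genuine gap in how you obtain boundedness of the iterates, and one unsound side claim. You conclude that $\{\Xbf^{(t)}\}$ is bounded from coercivity plus the sandwich inequality alone, but the line-search descent only controls $\Psi_{\Theta,\gamma}^{\epst}$ \emph{while $\epst$ is fixed}; when $\epst$ is reduced the surrogate \emph{increases} (since $\sqrt{\|z\|^{2}+\varepsilon^{2}}-\varepsilon$ is decreasing in $\varepsilon$), so $\Psi_{\Theta,\gamma}^{\epstp}(\xtp)$ can jump up at a reduction step and the argument as ordered does not yield a uniform bound on $\Psi_{\Theta,\gamma}(\xt)$. The paper closes exactly this hole with its Lemma~\ref{lem:phi_decay}: the Lyapunov quantity $\Psi_{\Theta,\gamma}^{\epst}(\xt)+m\epst$ is nonincreasing across \emph{all} iterations, including reductions, which combined with the sandwich bound gives $\Psi_{\Theta,\gamma}(\xt)\le\Psi_{\Theta,\gamma}^{\varepsilon_0}(\Xbf^{(0)})+m\varepsilon_0$ and hence boundedness by coercivity. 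Your plan needs this (or an equivalent) monotone quantity stated explicitly. Separately, your attempt to \emph{derive} coercivity from the model structure does not work: the data-fidelity terms only constrain the sampled k-space components $P_iF\xbf_i$, the feature norms $\|h_{w_i}(\xbf_i)\|_{2,1}$ need not grow along unsampled directions, and the synthesis term controls $\xbf_3$ only relative to the features of $\xbf_1,\xbf_2$. The paper simply \emph{assumes} coercivity of $\Psi_{\Theta,\gamma}$ (and $\Psi_{\Theta,\gamma}^{*}>-\infty$) as a standing hypothesis; you should do the same rather than claim it follows from the formulation.
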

	\begin{algorithm}[t]
		\caption{Learnable Descent Algorithm}
		\label{alg:lda}
		\begin{algorithmic}[1]
			\STATE \textbf{Input:} $\Xbf^{(0)}$, $0<\eta<1$, and $\varepsilon_0$, $a, \sigma >0$, $t = 0$. Max $T$, tolerance $\etol>0$.
			\FOR{$t=0,1,2,\dots,T-1$}
			\STATE $\xtp = \xt - \alpha_{t}  \nabla \Psi_{\Theta,\gamma}^{\epst} (\xt)$, where the step size $\alpha_{t}$ is obtained through \\ line search s.t. $ \Psi_{\Theta,\gamma}^{\epst}(\xtp) - \Psi_{\Theta,\gamma}^{\epst}(\xt) \le - \frac{1}{a} \| \xtp - \xt\|^2$ holds. \label{line_search}
			\STATE \textbf{if} $\|\nabla \Psi_{\Theta,\gamma}^{\epst}(\xtp)\| < \sigma \eta {\epst}$,  set $\epstp= \eta {\epst}$;  \textbf{otherwise}, set $\epstp={\epst}$. \label{reduction_cre}
			\STATE \textbf{if} $\sigma {\epst} < \etol$, 
			\textbf{terminate} and go to Line \ref{endfor},
			\ENDFOR{ and \textbf{output} $\Xbf^{(t)}$.} \label{endfor}
		\end{algorithmic}
	\end{algorithm}
	%
	\subsection{Bilevel Optimization Algorithm for Network Training}
	\label{section: bilevel}
	Suppose that we randomly sample $\mathcal{M}_{tr}$ data pairs $\{\D^{tr}_{i} \}_{i = 1} ^{\mathcal{M}_{tr}}$ for training and $\mathcal{M}_{val}$ data pairs $\{\D^{val}_{i} \}_{i = 1} ^{\mathcal{M}_{val}}$ for validation, where each $\D^{tr}_{i}(\mbox{or} \ \D^{val}_{i}) $ is composed of data pair $ \{(\fbf_1^{i}, \fbf_2^{i}), \Xbf^{*i} \}$, $\fbf_1^{i}, \fbf_2^{i}$ denote the given partial k-space data, and $\Xbf^{*i} = \{\xbf_1^{*i}, \xbf_2^{*i}, \xbf_3^{*i} \}$ denotes the corresponding reference images.
	
	To find the optimal value of the important hyper-parameter $\gamma$ for the synthesis term in \eqref{our_model}, we employ a bilevel optimization framework which solves for $\Theta$ on training set for any given $\gamma$ in the lower-level problem and tunes the optimal hyper-parameter $\gamma$ on validation set in the upper-level problem. More precisely, our bilevel optimization framework reads as:
	\begin{equation}
		\label{eq:bi-level}
		\begin{aligned}
			\min_{\gamma}  \quad  \textstyle{\sum^{\mathcal{M}_{val}}_{i=1} }\ell( \Theta (\gamma) , \gamma ; \D^{val}_{i}) \quad 
			\mbox{s.t.}  \quad  \Theta(\gamma)  = \argmin_{\Theta} \textstyle{\sum^{\mathcal{M}_{tr}}_{i=1}} \ell ( \Theta , \gamma; \D^{tr}_{i}),
		\end{aligned}
	\end{equation}
	\begin{equation}
		\label{eq:def-l}
		\begin{split}
			& \mbox{where}  \quad \ell ( \Theta , \gamma; \D_{i})  :=  \frac{\mu}{2} \|g_{\theta} ([h_{w_1}(\xbf_1^{*i}),h_{ w_2} (\xbf_2^{*i})]) - \xbf_3^{*i}\|_2^2 \\
			& + \textstyle{\sum_{j = 1}^{3}} \Big ( \frac{1}{2} \|\xbf_j^{(\hat{T})}({\Theta , \gamma}; \D_{i})  - \xbf_j^{*i} \|^2_2 
			+ (1 - SSIM(\xbf_j^{(\hat{T})}({\Theta , \gamma}; \D_{i}), \xbf_j^{*i} ) ) \Big ) ,
		\end{split}
	\end{equation}
	and the $\xbf_j^{(\hat{T})}(\cdot)$ denotes the output of the $\hat{T}$-phase network for the $j$th modality.
	The first term of the loss function $\ell$ in \eqref{eq:def-l} presses $g_{\theta}$ to accurately synthesize $\xbf_3$. The second term is to minimize the difference between the network output and the ground truth in the least square sense. The third term is to promote high structural similarity index \cite{wang2004image} of the reconstructed/synthesized images. 
	In \eqref{eq:bi-level}, the lower-level optimization learns the network parameters $\Theta$ with any fixed coefficient $\gamma$ on the training set, and the upper-level tunes the hyper-parameter $\gamma$  on the validation set, which mitigates the challenging overfitting issue. 
	
	The bi-level optimization problem \eqref{eq:bi-level} is very difficult to solve.
	In this work, we employ the penalty method proposed in \cite{mehra2019penalty} to solve this problem.
	For notation simplicity, we denote $\mathcal{L}(\Theta , \gamma ; \D) := \sum^{\mathcal{M}}_{i=1}\ell (\Theta , \gamma  ; \D_i)$ then rewrite \eqref{eq:bi-level} as
	\begin{equation}
		\min_{ \gamma} \mathcal{L}( \Theta(\gamma), \gamma ; \D^{val}) \ \ \ \ \ \mbox{s.t.} \ \ \Theta(\gamma) =   \argmin_{\Theta}\mathcal{L}( \Theta, \gamma ; \D^{tr}).
		\label{simplified bi-level}
	\end{equation}
	Following \cite{mehra2019penalty}, we relax the lower-level optimization problem to its first-order necessary condition:
	%
	\begin{equation}
		\min_{ \gamma} \mathcal{L}( \Theta(\gamma), \gamma ; \D^{val})  \ \ \ \ \ \mbox{s.t.} \ \ \nabla_{\Theta} \mathcal{L}( \Theta(\gamma), \gamma ; \D^{tr}) = 0.
		\label{simplified bi-level-1}
	\end{equation}
	Furthermore, we impose a quadratic penalty on the constraint and further relax the above problem as
	\begin{equation}
		\min_{ \Theta, \gamma} \big\{ \widetilde{\mathcal{L}}( \Theta, \gamma ; \D^{tr}, \D^{val}) := \mathcal{L}( \Theta, \gamma ; \D^{val}) + \frac{\lambda}{2} \| \nabla_{\Theta} \mathcal{L}( \Theta, \gamma; \D^{tr}) \|^2 \big\}.
		\label{simplified bi-level-2}
	\end{equation}
	Due to the large volume of the datasets, it is not possible to solve \eqref{simplified bi-level-2} in full-batch. Here we train the parameters using the mini-batch stochastic alternating direction method summarized in Algorithm \ref{alg:model}.
	\begin{algorithm}[!htb]
		\caption{Mini-batch alternating direction penalty algorithm}\label{alg:model}
		\begin{algorithmic}[1]
			\STATE \textbf{Input}  $\D^{tr}$, $\D^{val}$, $\delta_{tol}>0$, \textbf{Initialize}  $ \Theta$, $ {\gamma}$, $\delta$, $\lambda>0$ and $\nu_\delta \in(0, 1)$, \ $\nu_\lambda > 1$.
			\WHILE{$\delta > \delta_{tol}$}
			\STATE Sample training and validation batch $\mathcal{B}^{tr} \subset \D^{tr}, \mathcal{B}^{val} \subset \D^{val}$.
			\WHILE{$\|\nabla_{\Theta}\widetilde{\mathcal{L}}( \Theta, \gamma ; \mathcal{B}^{tr}, \mathcal{B}^{val})\|^2 + \| \nabla_{\gamma}\widetilde{\mathcal{L}}( \Theta, \gamma ; \mathcal{B}^{tr}, \mathcal{B}^{val})\| ^2 > \delta$}
			\FOR{$k=1,2,\dots,K$ (inner loop)}
			\STATE $ \Theta \leftarrow \Theta - \rho_{\Theta}^{(k)} \nabla_{\Theta}\widetilde{\mathcal{L}}( \Theta, \gamma ; \mathcal{B}^{tr}, \mathcal{B}^{val})$
			\ENDFOR
			\STATE $ \gamma \leftarrow \gamma - \rho_{\gamma} \nabla_{\gamma}\widetilde{\mathcal{L}}( \Theta, \gamma ; \mathcal{B}^{tr}, \mathcal{B}^{val})$
			\ENDWHILE{ and \textbf{update} $\delta \leftarrow \nu_\delta \delta$, $\ \lambda \leftarrow \nu_\lambda \lambda$. }
			\ENDWHILE{ and \textbf{output:} $\Theta, {\gamma}$.}
		\end{algorithmic}
		\label{penelty_method}
	\end{algorithm}
	\begin{figure}[h!]
		\centering 
		\includegraphics[width=.9\textwidth]{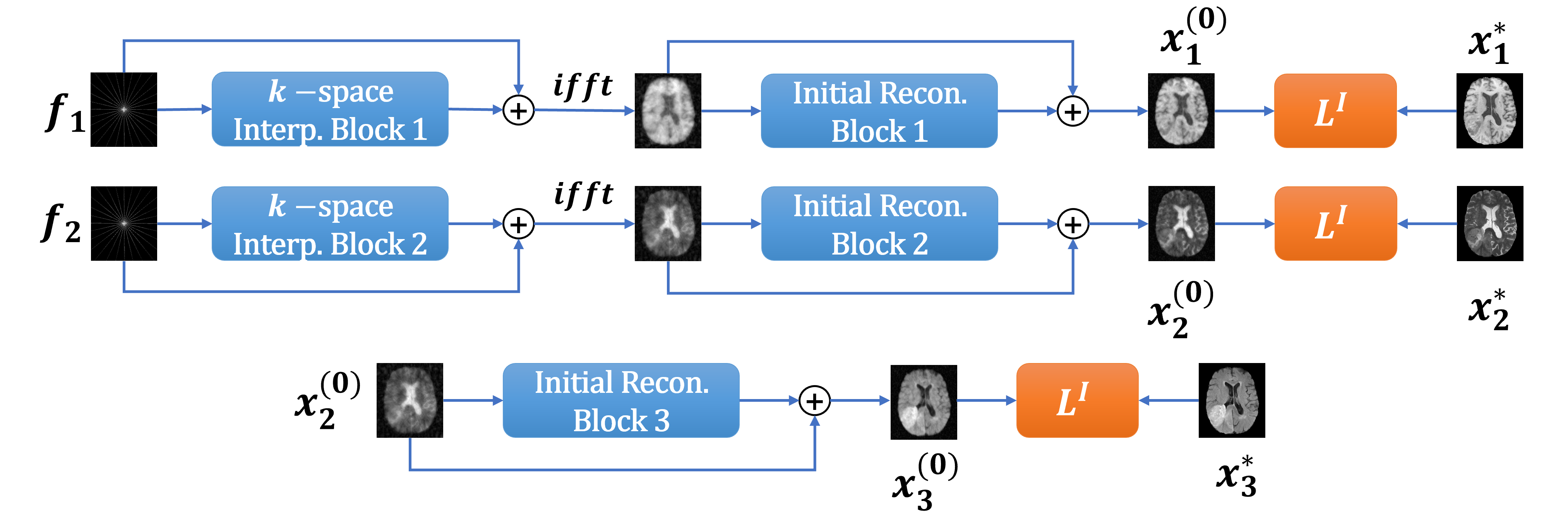}
		\includegraphics[width=.9\textwidth]{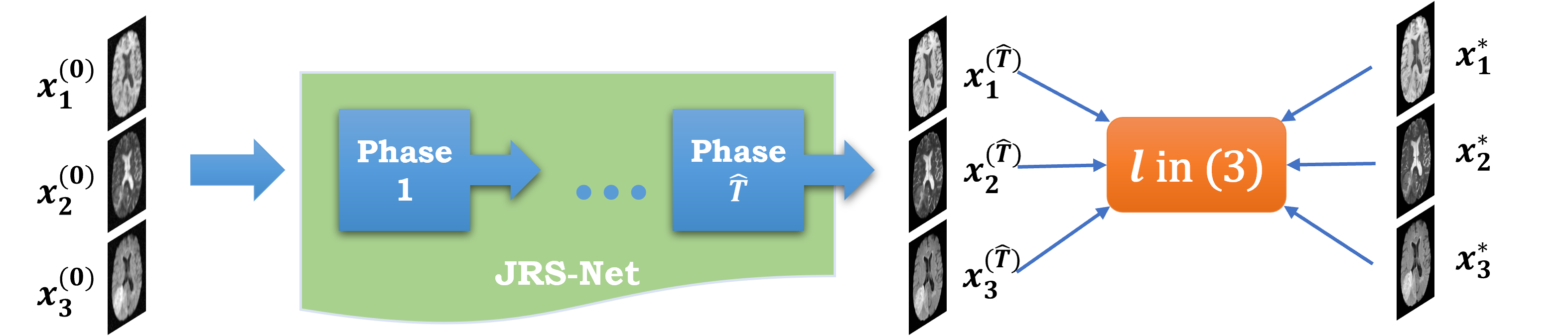}
		\caption{The overall architecture of the proposed network for joint multimodal MRI reconstruction and synthesis: INIT-Nets (up and middle), JRS-Net (bottom).}
		\label{fig:flowchart}
	\end{figure}
	\section{Experiments}
	\subsection{Initialization Networks}
	\label{Initialization}
	The initials $\{\xbf_1^{(0)}, \xbf_2^{(0)}, \xbf_3^{(0)} \}$ are obtained through the Initialization Networks (INIT-Nets) shown in Fig. \ref{fig:flowchart}, where the \textbf{k-space interpolation block} interpolates the missing components of the undersampled k-space data then fed into the \textbf{initial reconstruction block} in the image domain after inverse Fourier transform. All blocks are designed in residual structure \cite{he2016deep}.  
	%
	To train the INIT-Nets, we minimize the difference between its outputs and the ground truth with loss
	$\label{eq:g}
	L^I(\xbf_j^{(0)}, \xbf_j^{*}) = \|\xbf_j^{(0)} - \xbf_j^{*} \|_1, \ j = 1, 2, 3.
	$
	The INIT-Nets only produce initial approximate images with limited accuracy, so we fed them into the Joint Reconstruction and Synthesis Network (JRS-Net) illustrated in Section \ref{loa induced net} to obtain the final results. INIT-Nets are pre-trained whose parameters are fixed during training the JRS-Net.
	\begin{table}[htb]
		\centering
		\caption{Quantitative comparison of the synthesis results. }\label{quant_results}
		\begin{tabular}{l|l|c|c|c}
			\hline
			& \textbf{Methods} &  \textbf{PSNR} & \textbf{SSIM} & \textbf{NMSE}\\
			\hline
			T1 $+$ T2 $\to$ FLAIR  & MM \cite{chartsias2017multimodal}  & 22.89$\pm$1.48 &  0.6671$\pm$0.0586 &  0.0693$\pm$0.0494 \\
			& MM-GAN \cite{sharma2019missing}  & 23.35$\pm$1.03 &  0.7084$\pm$0.0370 &  0.0620$\pm$0.0426 \\
			& MMGradAdv \cite{liu2020multimodal}  & 24.03$\pm$1.40 & 0.7586$\pm$0.0326 & 0.0583$\pm$0.0380 \\
			& Hi-Net \cite{zhou2020hi} &   25.03$\pm$1.38 & 0.8499$\pm$0.0300 & 0.0254$\pm$0.0097\\
			& Proposed &  \textbf{26.19$\pm$1.34} & \textbf{0.8677$\pm$0.0307} & \textbf{0.0205$\pm$0.0087}\\
			\hline
			$\fbf_{T1} + \fbf_{T2} \to$ FLAIR & Proposed & \textbf{25.74$\pm$1.25} & \textbf{0.8597$\pm$0.0315} & \textbf{0.0215$\pm$0.0085}\\
			\hline
			T1 $+$ FLAIR  $\to$ T2 & MM \cite{chartsias2017multimodal}  & 23.89$\pm$1.61 & 0.6895 $\pm$0.0511 & 0.0494$\pm$0.0185 \\
			& MM-GAN \cite{sharma2019missing}  & 24.15$\pm$0.90 & 0.7217$\pm$0.0432 & 0.0431$\pm$0.0114 \\
			& MMGradAdv \cite{liu2020multimodal} & 25.06$\pm$1.49 &  0.7597$\pm$0.0486 & 0.0406$\pm$ 0.0165\\ 
			& Hi-Net \cite{zhou2020hi} & 25.95$\pm$1.50 & 0.8552$\pm$0.0410 & 0.0229$\pm$0.0070\\
			\hline
			$\fbf_{T1} + \fbf_{FLAIR} \to$ T2 & Proposed & \textbf{26.52$\pm$1.57} & \textbf{0.8610$\pm$0.0438} & \textbf{0.0207$\pm$0.0072}\\
			\hline
			T2 $+$ FLAIR  $\to$ T1 & MM \cite{chartsias2017multimodal} &  23.53$\pm$2.18 & 0.7825$\pm$0.0470 &  0.0301$\pm$0.0149 \\
			& MM-GAN \cite{sharma2019missing}  & 23.63$\pm$2.31  & 0.7908$\pm$0.0421 & 0.0293 $\pm$0.0119\\
			& MMGradAdv \cite{liu2020multimodal} & 24.73$\pm$2.23 & 0.8065$\pm$0.0423 &  0.0252$\pm$0.0118 \\ 
			& Hi-Net \cite{zhou2020hi} &  25.64$\pm$1.59 & 0.8729$\pm$0.0349 &  0.0130$\pm$0.0097 \\
			\hline
			$\fbf_{T2} + \fbf_{FLAIR} \to$ T1 & Proposed &  \textbf{26.31$\pm$1.80} & \textbf{0.9085$\pm$0.0311} & \textbf{0.0112$\pm$0.0113} \\
			\hline
			T1 $+$ T2 $\to$ T1CE & MM \cite{chartsias2017multimodal} &  23.37$\pm$1.56 & 0.7272$\pm$0.0574 &  0.0312$\pm$ 0.0138 \\
			& MM-GAN \cite{sharma2019missing}  & 23.68$\pm$0.97 & 0.7577$\pm$0.0637 & 0.0302$\pm$0.0133\\
			& MMGradAdv \cite{liu2020multimodal} &  24.23$\pm$1.90 & 0.7887$\pm$0.0519  & 0.0273$\pm$0.0136   \\ 
			& Hi-Net \cite{zhou2020hi} &  25.21$\pm$1.20 & 0.8650$\pm$0.0328  &  0.0180$\pm$0.0134   \\
			\hline
			$\fbf_{T1} + \fbf_{T2} \to$ T1CE & Proposed &  \textbf{25.91$\pm$1.21} & \textbf{0.8726$\pm$0.0340}  &  \textbf{0.0167$\pm$0.0133}   \\
			\hline
		\end{tabular}
	\end{table}
	\begin{table}[!h]
		\centering
		\caption{Quantitative comparison of the reconstructed T1 and T2 images without and with joint synthesis of FLAIR images.} \label{t1t2}
		\addtolength{\tabcolsep}{5pt}
		\begin{tabular}{c|c|c|c|c}
			\hline
			\textbf{Modality} & \textbf{FLAIR involved?} & \textbf{PSNR} & \textbf{SSIM} & \textbf{NMSE}\\
			\hline
			T1 & No &  37.00$\pm$0.74 & 0.9605$\pm$0.0047 & 0.0008$\pm$0.0002 \\
			& Yes & 37.49$\pm$0.83 &  0.9628$\pm$0.0074 & 0.0007$\pm$0.0002 \\
			T2 & No &  37.24$\pm$1.22 & 0.9678$\pm$0.0028 & 0.0027$\pm$0.0010 \\
			& Yes & 37.67$\pm$1.34 & 0.9663$\pm$0.0043 & 0.0024$\pm$0.0009 \\ 
			\hline
		\end{tabular}
	\end{table}
	
	\subsection{Experiment Setup}
	The datasets are from BRATS 2018 challenge \cite{menze2014multimodal} which were scanned from four modalities T1, T2, Flair and T1-weighted contrast-enhanced (T1CE) and we picked high-grade gliomas (HGG) set which consists 210 patients. We applied Fourier transform to the images and undersampled the k-space using a radial mask of sampling ratio $40 \%$ to obtain partial k-space data for training.  We randomly took the center 10 slices from 6 patients as test set with cropped size $ 160 \times 180$ and split the rest of HGG dataset into training and validation sets with 1020 images separately. We compared with four state-of-the-art multimodal MR synthesis methods: Multimodal MR (MM) \cite{chartsias2017multimodal}, MM-GAN \cite{sharma2019missing}, MMGradAdv \cite{liu2020multimodal} and Hi-Net \cite{zhou2020hi}. 
	%
	The hyper-parameter selection for our algorithm is provided in Supplementary Material. 
	Three metrics are used for evaluation: {peak signal-to-noise ratio (PSNR) \cite{hore2010image}}, structural similarity (SSIM) \cite{wang2004image}, and normalized mean squared error {(}NMSE{) \cite{NMSE}}.
	\begin{figure}[h!]
		\centering
		\includegraphics[width=0.03\linewidth]{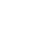}
		\includegraphics[width=0.15\linewidth]{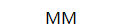}
		\includegraphics[width=0.15\linewidth]{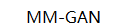}
		\includegraphics[width=0.15\linewidth]{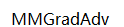}
		\includegraphics[width=0.15\linewidth]{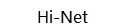}
		\includegraphics[width=0.15\linewidth]{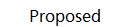}
		\includegraphics[width=0.15\linewidth]{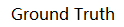}\\
		\includegraphics[width=0.03\linewidth]{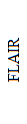}
		\includegraphics[width=0.15\linewidth]{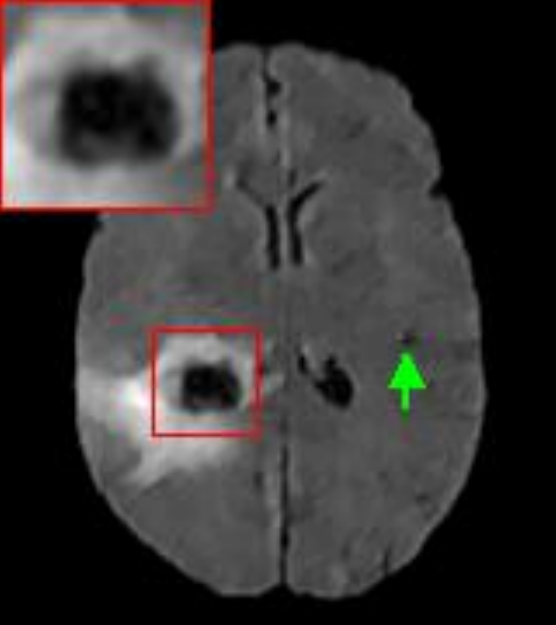}
		\includegraphics[width=0.15\linewidth]{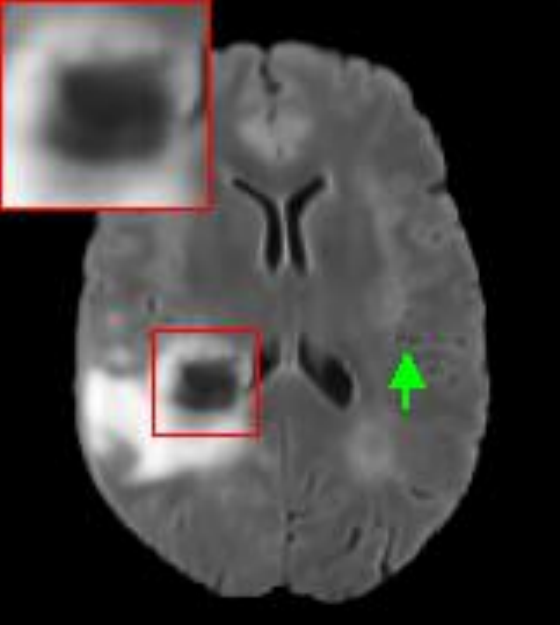}
		\includegraphics[width=0.15\linewidth]{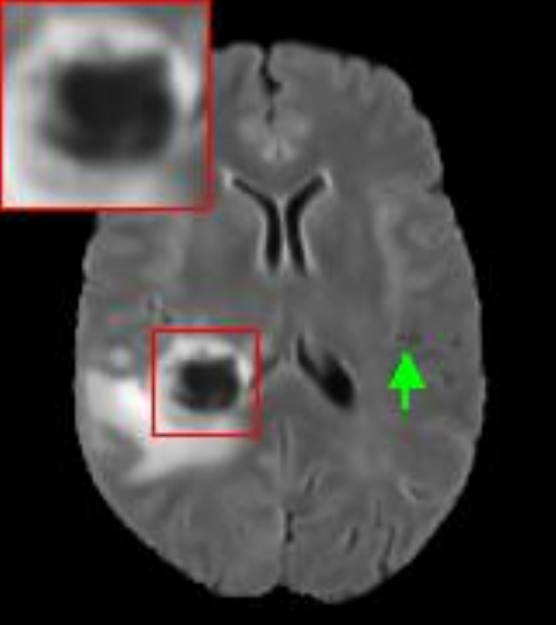}
		\includegraphics[width=0.15\linewidth]{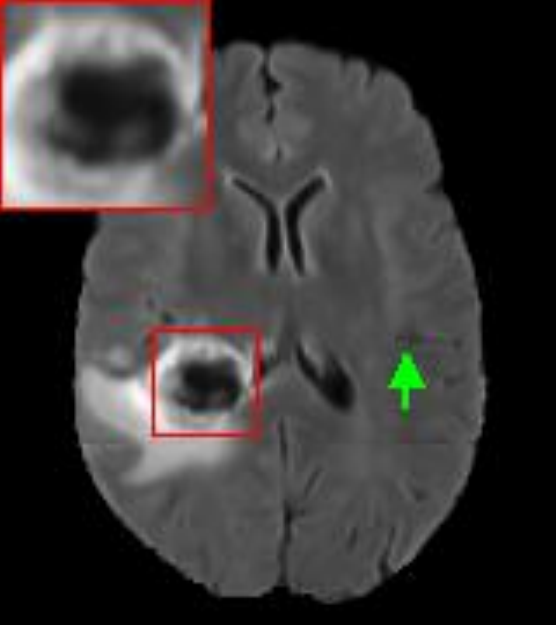}
		\includegraphics[width=0.15\linewidth]{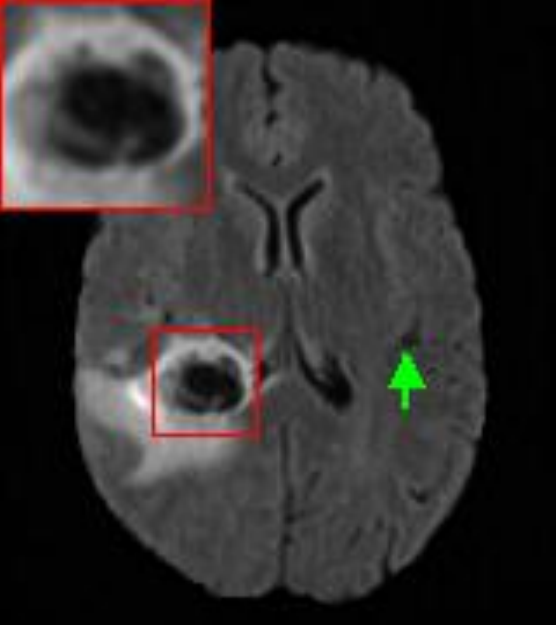}
		\includegraphics[width=0.15\linewidth]{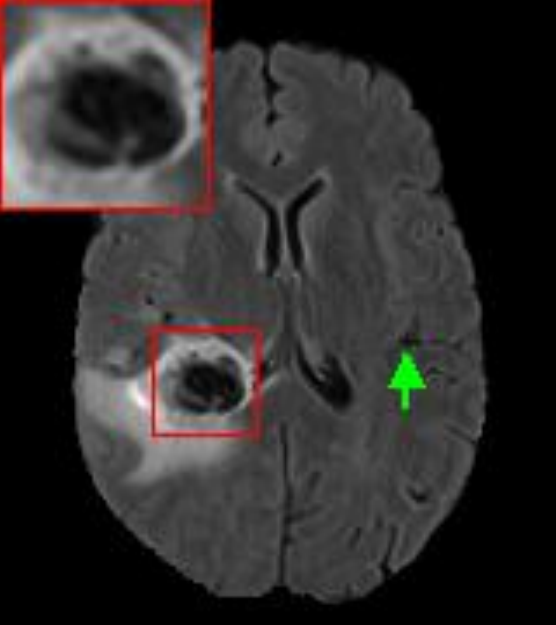}\\
		\includegraphics[width=0.03\linewidth]{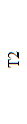}
		\includegraphics[width=0.15\linewidth]{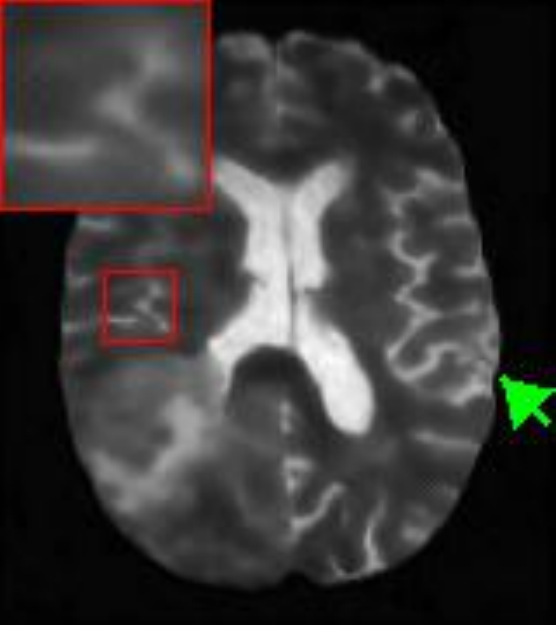}
		\includegraphics[width=0.15\linewidth]{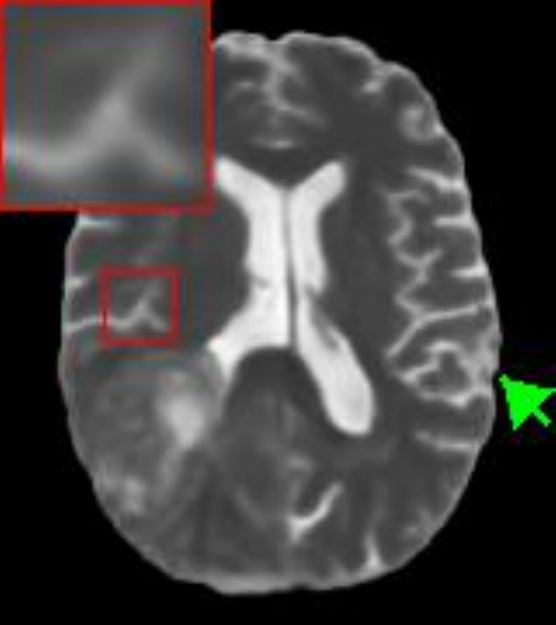}
		\includegraphics[width=0.15\linewidth]{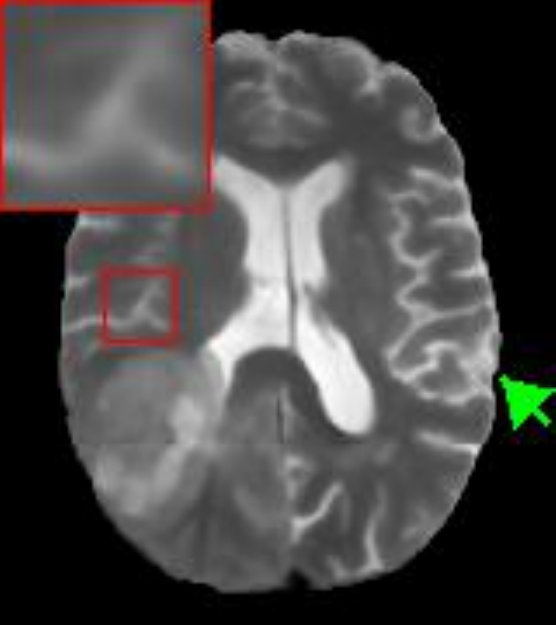}
		\includegraphics[width=0.15\linewidth]{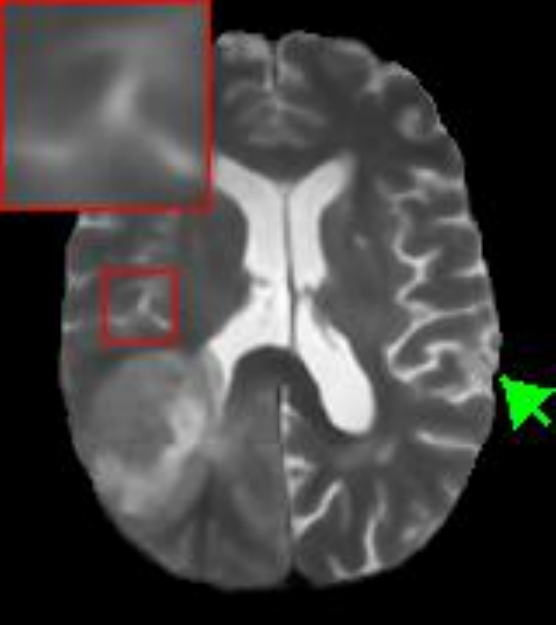}
		\includegraphics[width=0.15\linewidth]{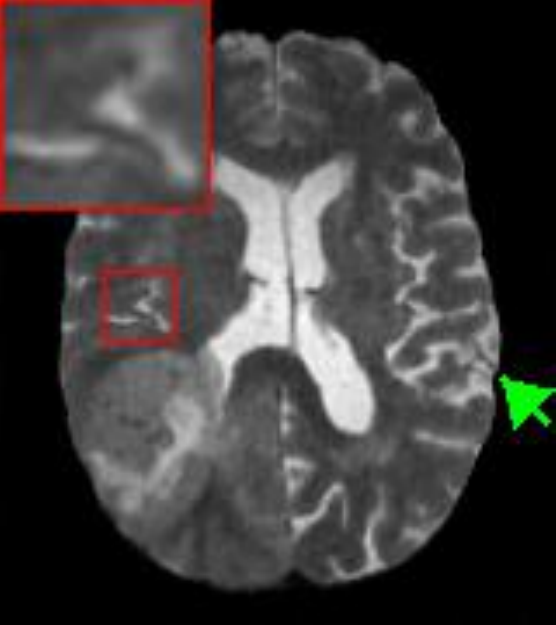}
		\includegraphics[width=0.15\linewidth]{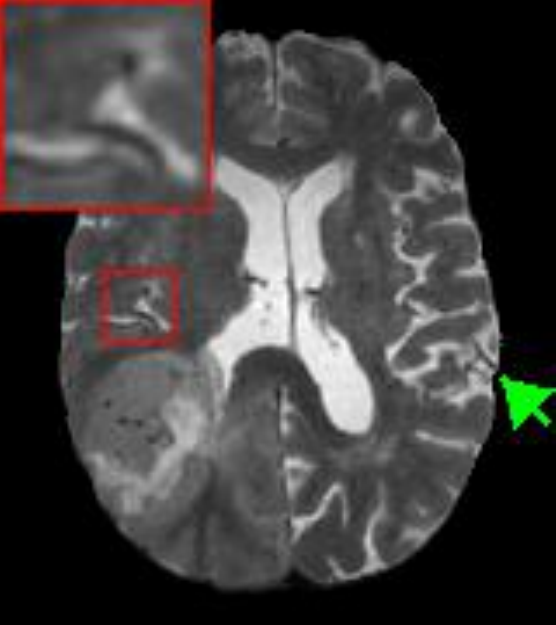}\\
		\includegraphics[width=0.03\linewidth]{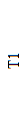}
		\includegraphics[width=0.15\linewidth]{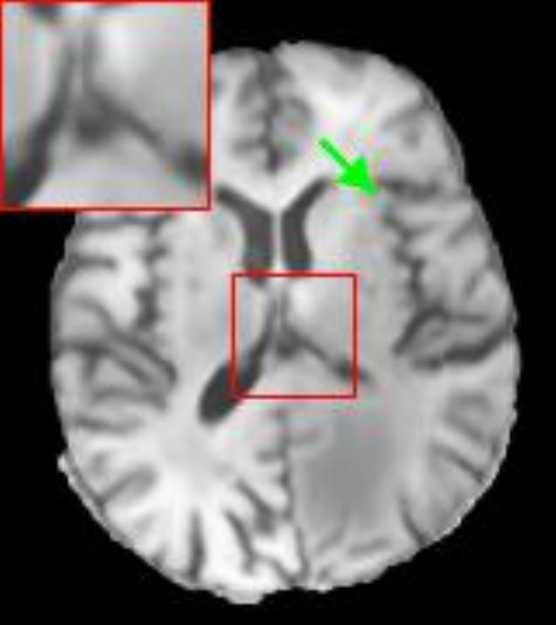}
		\includegraphics[width=0.15\linewidth]{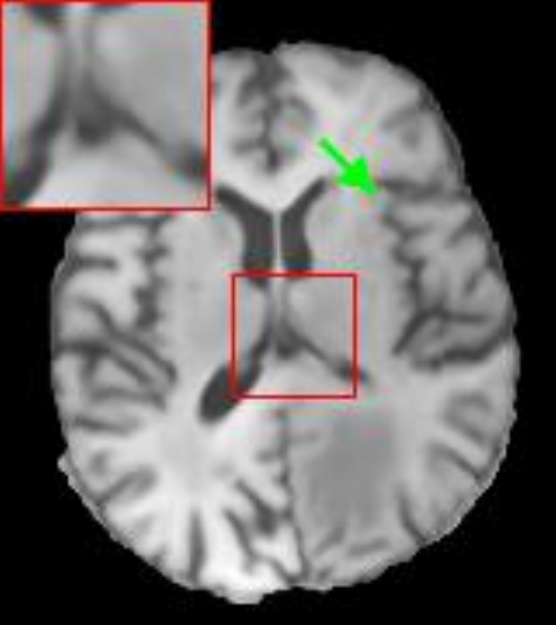}
		\includegraphics[width=0.15\linewidth]{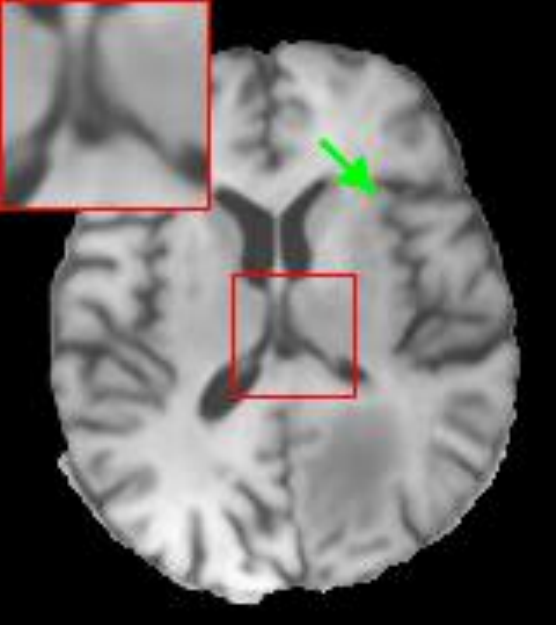}
		\includegraphics[width=0.15\linewidth]{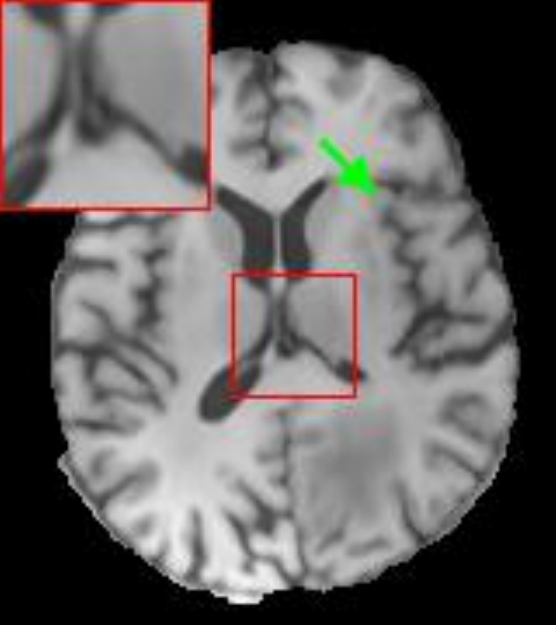}
		\includegraphics[width=0.15\linewidth]{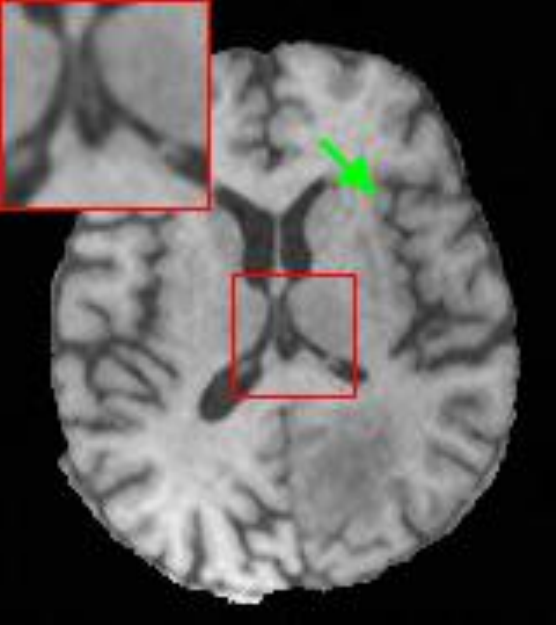}
		\includegraphics[width=0.15\linewidth]{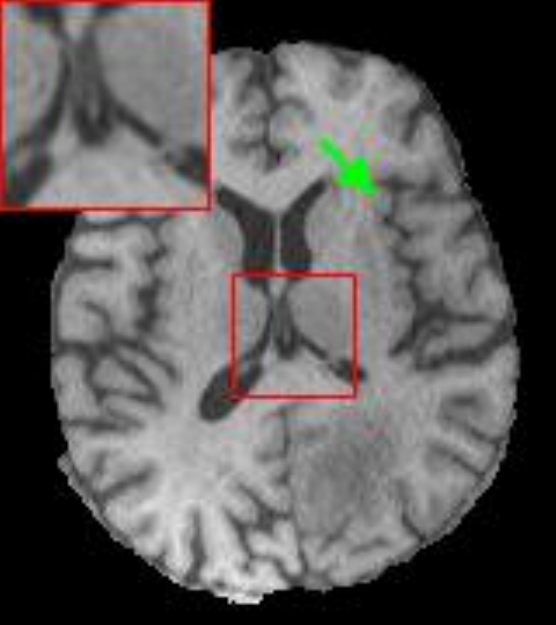}\\
		\includegraphics[width=0.03\linewidth]{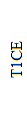}
		\includegraphics[width=0.15\linewidth]{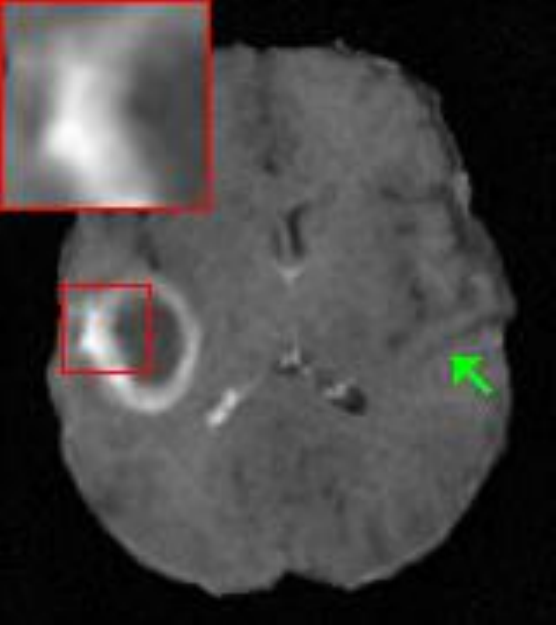}
		\includegraphics[width=0.15\linewidth]{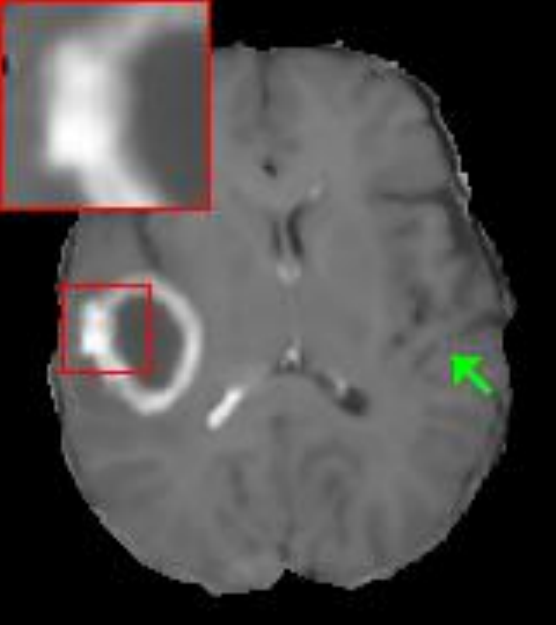}
		\includegraphics[width=0.15\linewidth]{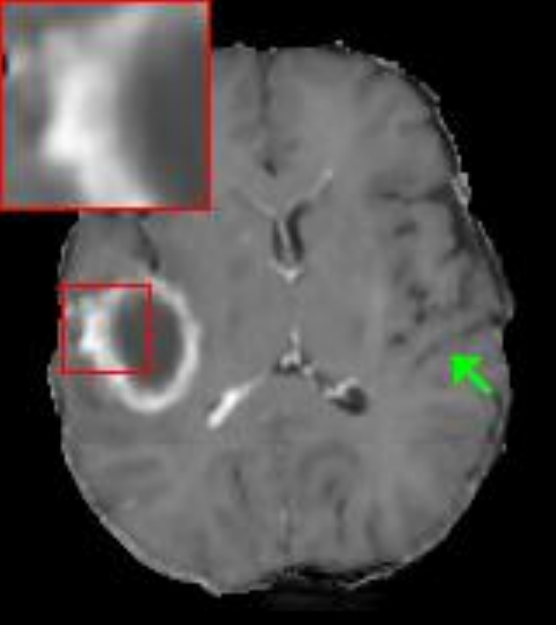}
		\includegraphics[width=0.15\linewidth]{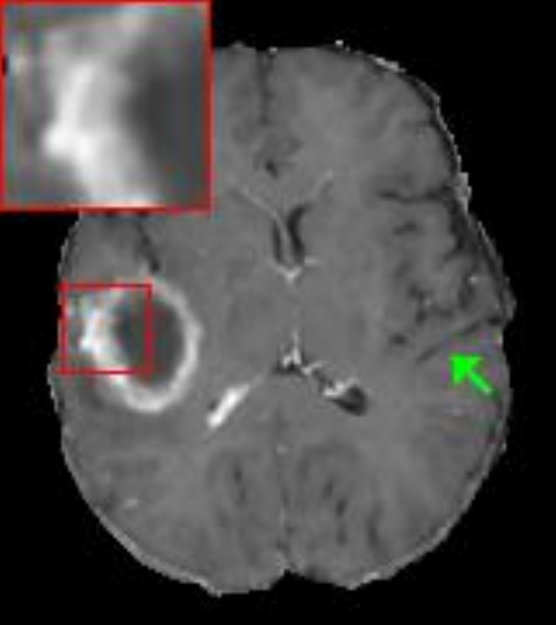}
		\includegraphics[width=0.15\linewidth]{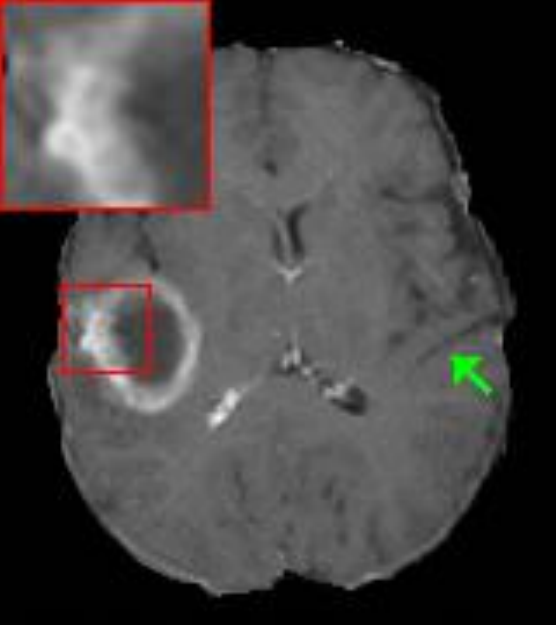}
		\includegraphics[width=0.15\linewidth]{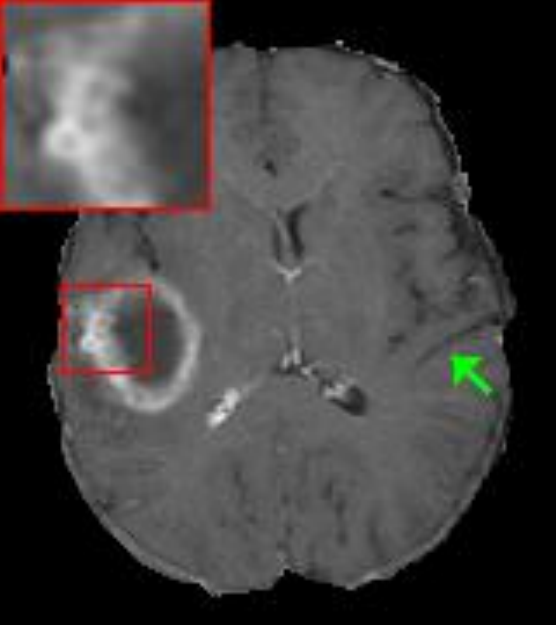}\\
		\caption{Qualitative comparison between the state-of-the-art multimodal synthesis methods and proposed method. From first row to last row: T1 $+$ T2 $\to$ FLAIR, T1 $+$ FLAIR $\to$ T2, T2 $+$ FLAIR $\to$ T1 and T1 $+$ T2 $\to$ T1CE.  }\label{fig:synthesis}
	\end{figure}

	\subsection{Experimental Results and Evaluation}
	We take four synthesis directions T1 $+$ T2 $\to$ FLAIR,  T1 $+$ FLAIR  $\to$ T2,  T2 $+$ FLAIR  $\to$ T1 and T1 $+$ T2 $\to$ T1CE.  Table \ref{quant_results} reports the quantitative result, which indicates that the proposed method outperforms MM and GAN-based methods (MM-GAN, MMGradAdv, Hi-Net). The average PSNR of our method improves 0.67 dB comparing to the baseline Hi-Net, SSIM improves about 0.01,  and NMSE reduces about 0.01.  We also conduct the pure-synthesis experiment (T1 $+$ T2 $\to$ FLAIR) by inputting fully-scanned source data, where the model \eqref{our_model} minimizes w.r.t. $\xbf_3$ only and excludes the data-fidelity terms. This result is in Table \ref{quant_results} where the PSNR value is 1.16 dB higher than baseline method Hi-Net. 
	Table \ref{t1t2} shows that the joint reconstruction and synthesis improves PSNR by 0.46 dB comparing to purely reconstructing T1 and T2 without synthesizing FLAIR. 
	We think this is because that the synthesis operator $g_{\theta}$ also leverages data $\xbf_3$ to assist shaping the feature maps of $\xbf_1$ and $\xbf_2$, which improves the reconstruction quality of the latter images.
	
	Fig. \ref{fig:synthesis} displays the synthetic MRI results on different source and target modality images. The proposed synthetic images preserve more details and distinct edges of the tissue boundary (indicated by the magnified red windows and green arrows) and the synthetic images are more alike the ground truth images comparing to other referenced methods. 
	%
	\section{Conclusion}
	We propose a novel deep model that simultaneously reconstructs the source modality images from the partially scanned k-space MR data and synthesizes the target modality image without any k-space information by iterating an LOA with convergence guaranteed. The network is trained by a bilevel-optimization training algorithm that uses training and validation sets to further improve the performance. 
	Extensive experiments on brain MR data with different modalities validate the magnificent performance of the proposed model.
	
	%
	%
	\bibliographystyle{splncs04}
	\bibliography{paper63}

\begin{thebibliography}{10}
\providecommand{\url}[1]{\texttt{#1}}
\providecommand{\urlprefix}{URL }
\providecommand{\doi}[1]{https://doi.org/#1}

\bibitem{pmlr-v119-bao20b}
Bao, R., Gu, B., Huang, H.: Fast {OSCAR} and {OWL} regression via safe
  screening rules. In: III, H.D., Singh, A. (eds.) Proceedings of the 37th
  International Conference on Machine Learning. Proceedings of Machine Learning
  Research, vol.~119, pp. 653--663. PMLR (13--18 Jul 2020)

\bibitem{bian2022optimal}
Bian, W., Chen, Y., Ye, X.: An optimal control framework for joint-channel
  parallel mri reconstruction without coil sensitivities. Magnetic Resonance
  Imaging  (2022)

\bibitem{jimaging7110231}
Bian, W., Chen, Y., Ye, X., Zhang, Q.: An optimization-based meta-learning
  model for mri reconstruction with diverse dataset. Journal of Imaging
  \textbf{7}(11) (2021)

\bibitem{chartsias2017multimodal}
Chartsias, A., Joyce, T., Giuffrida, M.V., Tsaftaris, S.A.: Multimodal mr
  synthesis via modality-invariant latent representation. IEEE transactions on
  medical imaging  \textbf{37}(3),  803--814 (2017)

\bibitem{chen2021learnable}
Chen, Y., Liu, H., Ye, X., Zhang, Q.: Learnable descent algorithm for nonsmooth
  nonconvex image reconstruction. SIAM Journal on Imaging Sciences
  \textbf{14}(4),  1532--1564 (2021)

\bibitem{cheng2019model}
Cheng, J., Wang, H., Ying, L., Liang, D.: Model learning: Primal dual networks
  for fast mr imaging. In: International Conference on Medical Image Computing
  and Computer-Assisted Intervention. pp. 21--29. Springer (2019)

\bibitem{dar2019image}
Dar, S.U., Yurt, M., Karacan, L., Erdem, A., Erdem, E., {\c{C}}ukur, T.: Image
  synthesis in multi-contrast mri with conditional generative adversarial
  networks. IEEE transactions on medical imaging  \textbf{38}(10),  2375--2388
  (2019)

\bibitem{dar2020prior}
Dar, S.U., Yurt, M., Shahdloo, M., Ild{\i}z, M.E., T{\i}naz, B., {\c{C}}ukur,
  T.: Prior-guided image reconstruction for accelerated multi-contrast mri via
  generative adversarial networks. IEEE Journal of Selected Topics in Signal
  Processing  \textbf{14}(6),  1072--1087 (2020)

\bibitem{glorot2010understanding}
Glorot, X., Bengio, Y.: Understanding the difficulty of training deep
  feedforward neural networks. In: Proceedings of the thirteenth international
  conference on artificial intelligence and statistics. pp. 249--256. JMLR
  Workshop and Conference Proceedings (2010)

\bibitem{he2016deep}
He, K., Zhang, X., Ren, S., Sun, J.: Deep residual learning for image
  recognition. In: Proceedings of the IEEE conference on computer vision and
  pattern recognition. pp. 770--778 (2016)

\bibitem{hore2010image}
Hore, A., Ziou, D.: Image quality metrics: Psnr vs. ssim. In: 2010 20th
  international conference on pattern recognition. pp. 2366--2369. IEEE (2010)

\bibitem{iglesias2013synthesizing}
Iglesias, J.E., Konukoglu, E., Zikic, D., Glocker, B., Van~Leemput, K., Fischl,
  B.: Is synthesizing mri contrast useful for inter-modality analysis? In:
  International Conference on Medical Image Computing and Computer-Assisted
  Intervention. pp. 631--638. Springer (2013)

\bibitem{kingma2014adam}
Kingma, D.P., Ba, J.: Adam: {A} method for stochastic optimization. In: Bengio,
  Y., LeCun, Y. (eds.) 3rd International Conference on Learning
  Representations, {ICLR} 2015, San Diego, CA, USA, May 7-9, 2015, Conference
  Track Proceedings (2015)

\bibitem{lee2018deep}
Lee, D., Yoo, J., Tak, S., Ye, J.C.: Deep residual learning for accelerated mri
  using magnitude and phase networks. IEEE Transactions on Biomedical
  Engineering  \textbf{65}(9),  1985--1995 (2018)

\bibitem{liu2020multimodal}
Liu, X., Yu, A., Wei, X., Pan, Z., Tang, J.: Multimodal mr image synthesis
  using gradient prior and adversarial learning. IEEE Journal of Selected
  Topics in Signal Processing  \textbf{14}(6),  1176--1188 (2020)

\bibitem{mehra2019penalty}
Mehra, A., Hamm, J.: Penalty method for inversion-free deep bilevel
  optimization. arXiv preprint arXiv:1911.03432  (2019)

\bibitem{menze2014multimodal}
Menze, B.H., Jakab, A., Bauer, S., Kalpathy-Cramer, J., Farahani, K., Kirby,
  J., Burren, Y., Porz, N., Slotboom, J., Wiest, R., et~al.: The multimodal
  brain tumor image segmentation benchmark (brats). IEEE transactions on
  medical imaging  \textbf{34}(10),  1993--2024 (2014)

\bibitem{monga2021algorithm}
Monga, V., Li, Y., Eldar, Y.C.: Algorithm unrolling: Interpretable, efficient
  deep learning for signal and image processing. IEEE Signal Processing
  Magazine  \textbf{38}(2),  18--44 (2021)

\bibitem{NMSE}
Poli, A., Cirillo, M.: On the use of the normalized mean square error in
  evaluating dispersion model performance. Atmospheric Environment. Part A.
  General Topics  \textbf{27},  2427--2434 (10 1993)

\bibitem{sharma2019missing}
Sharma, A., Hamarneh, G.: Missing mri pulse sequence synthesis using
  multi-modal generative adversarial network. IEEE transactions on medical
  imaging  \textbf{39}(4),  1170--1183 (2019)

\bibitem{sohail2019unpaired}
Sohail, M., Riaz, M.N., Wu, J., Long, C., Li, S.: Unpaired multi-contrast mr
  image synthesis using generative adversarial networks. In: International
  Workshop on Simulation and Synthesis in Medical Imaging. pp. 22--31. Springer
  (2019)

\bibitem{sriram2020end}
Sriram, A., Zbontar, J., Murrell, T., Defazio, A., Zitnick, C.L., Yakubova, N.,
  Knoll, F., Johnson, P.: End-to-end variational networks for accelerated mri
  reconstruction. In: International Conference on Medical Image Computing and
  Computer-Assisted Intervention. pp. 64--73. Springer (2020)

\bibitem{sun2016deep}
Sun, J., Li, H., Xu, Z., et~al.: Deep admm-net for compressive sensing mri.
  Advances in neural information processing systems  \textbf{29} (2016)

\bibitem{WANG2020136}
Wang, S., et~al.: Deepcomplexmri: Exploiting deep residual network for fast
  parallel mr imaging with complex convolution. Magnetic Resonance Imaging
  \textbf{68},  136 -- 147 (2020)

\bibitem{wang2004image}
Wang, Z., et~al.: Image quality assessment: from error visibility to structural
  similarity. IEEE transactions on image processing  \textbf{13}(4),  600--612
  (2004)

\bibitem{welander2018generative}
Welander, P., Karlsson, S., Eklund, A.: Generative adversarial networks for
  image-to-image translation on multi-contrast mr images-a comparison of
  cyclegan and unit. arXiv preprint arXiv:1806.07777  (2018)

\bibitem{yang2017dagan}
Yang, G., Yu, S., Dong, H., Slabaugh, G., Dragotti, P.L., Ye, X., Liu, F.,
  Arridge, S., Keegan, J., Guo, Y., et~al.: Dagan: Deep de-aliasing generative
  adversarial networks for fast compressed sensing mri reconstruction. IEEE
  transactions on medical imaging  \textbf{37}(6),  1310--1321 (2017)

\bibitem{yang2020model}
Yang, Y., Wang, N., Yang, H., Sun, J., Xu, Z.: Model-driven deep attention
  network for ultra-fast compressive sensing mri guided by cross-contrast mr
  image. In: International Conference on Medical Image Computing and
  Computer-Assisted Intervention. pp. 188--198. Springer (2020)

\bibitem{zhang2018ista}
Zhang, J., Ghanem, B.: Ista-net: Interpretable optimization-inspired deep
  network for image compressive sensing. In: Proceedings of the IEEE conference
  on computer vision and pattern recognition. pp. 1828--1837 (2018)

\bibitem{zhou2020hi}
Zhou, T., Fu, H., Chen, G., Shen, J., Shao, L.: Hi-net: hybrid-fusion network
  for multi-modal mr image synthesis. IEEE transactions on medical imaging
  \textbf{39}(9),  2772--2781 (2020)

\end{thebibliography}

	\section{Supplementary}
	\section{Hyper-parameter Selection}
	All experiments are implemented on Windows workstation with Nvidia GTX-1080Ti GPUs and the parameters are initialized with Xavier initialization \cite{glorot2010understanding} and trained with ADAM optimizer \cite{kingma2014adam} with initial learning rate $0.001$. We put $\gamma = 1$ in model (1), $\mu = 0.1$ in loss $\ell$. In our experiment, we use all complex convolution operators \cite{WANG2020136} where we set $l=4$ convolutions with kernel size $3 \times 3 \times 64$ in $h_{w_i}$ and $l' = 6$ convolutions with kernel size $3 \times 3 \times 128$ in $g_{\theta}$. For Algorithm \ref{alg:lda}, considering both algorithm convergence and the computational efficiency, we take the parameters as follows after trials: $\alpha_0= 0.01, \eta_0=0.01, \varepsilon_0 = 0.001, a = 10^5, \sigma = 10^3, \rho =0.9$. We also set the termination tolerance $\etol = 1\times 10^{-3}$, together with the termination condition defined in Line 5, the algorithm stops at 11 phases once $\etol$ satisfies the stopping criteria. 
	Similarly, in Algorithm \ref{alg:model}, we select the parameters as follows: $\nu_\delta =0.95 $, $\delta =1 \times 10^{-3}$, $\lambda =10^{-4}$, $\nu_\lambda = 1.001$ and $ \rho_\gamma = 0.9$.  We decide the batch size to be $2$ considering the GPU memory and data size. We set $\delta_{tol} = 4.35 \times 10 ^ {-6}$ which makes the algorithm stop at around 1000 epochs. 
	
	\subsection{Proof of Theorem 1}
	We assume that $\Psi_{\Theta, \gamma}$ is coercive and $\Psi_{\Theta, \gamma}^* = \min_{\Xbf} \Psi_{\Theta, \gamma}(\Xbf) > -\infty$, which are easy to satisfy in practice. For any set $\Scal \subset \mathbb{R}^n$, we denote $\mathrm{dist}(\ybf, \Scal) := \inf\{ \|\ybf - \xbf\| \ \vert \ \xbf \in \Scal \}$. The definition of the Clark subdifferential and Clark stationary point can be found in [4]. We need the following lemmas to prove Theorem 1. Since Lemma \ref{lem:bound} can be verified by direct calculation, and Lemma \ref{lem:inner} and \ref{lem:phi_decay} are similar as those given in \cite{chen2021learnable,jimaging7110231}, we omit their proofs here.
	\begin{lemma}\label{r_lips}
		The gradient of $\Psi_{\Theta, \gamma}^{\varepsilon}(\Xbf)$ is Lipschitz continuous.
	\end{lemma}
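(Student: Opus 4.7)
The plan is to decompose $\Psi_{\Theta,\gamma}^{\varepsilon}$ into its three structural pieces
\[
\Psi_{\Theta,\gamma}^{\varepsilon}(\Xbf)=D(\Xbf)+R^{\varepsilon}(\Xbf)+S(\Xbf),
\]
where $D(\Xbf)=\tfrac12\sum_{i=1}^{2}\|P_iF\xbf_i-\fbf_i\|^2$ is the data-fidelity term, $R^{\varepsilon}(\Xbf)=\tfrac13\sum_{i=1}^{3}\|h_{w_i}(\xbf_i)\|_{\varepsilon_{2,1}}$ is the smoothed $(2,1)$-regularizer, and $S(\Xbf)=\tfrac{\gamma}{2}\|g_{\theta}([h_{w_1}(\xbf_1),h_{w_2}(\xbf_2)])-\xbf_3\|^2$ is the synthesis term, then to show each piece has a Lipschitz continuous gradient (on the sublevel set where the iterates live) and invoke the elementary fact that a finite sum of functions with Lipschitz gradients has a Lipschitz gradient.

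For $D$, the gradient $\nabla_{\xbf_i}D=F^{*}P_i^{*}(P_iF\xbf_i-\fbf_i)$ is affine in $\Xbf$, so $D$ has a globally Lipschitz gradient with constant at most $\max_i\|F^{*}P_i^{*}P_iF\|\le 1$ since $F$ is unitary and $P_i$ is a binary projection. For $R^{\varepsilon}$, the scalar building block $\phi_{\varepsilon}(u)=\sqrt{\|u\|^2+\varepsilon^2}-\varepsilon$ satisfies by direct calculation $\|\nabla\phi_{\varepsilon}(u)\|\le 1$ and $\|\nabla^2\phi_{\varepsilon}(u)\|\le 1/\varepsilon$ uniformly in $u$. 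Since $h_{w_i}$ is a finite composition of linear convolutions and the smoothed ReLU $\sigma$, and $\sigma',\sigma''$ are globally bounded, repeated application of the chain rule shows that each $h_{w_i,j}$ is $C^2$ with globally bounded first and second derivatives (the convolution weights $w_i$ are fixed throughout this analysis). The Hessian chain rule
\[
\nabla^2(\phi_{\varepsilon}\circ h_{w_i,j})=(\nabla h_{w_i,j})^{\top}\nabla^2\phi_{\varepsilon}(h_{w_i,j})(\nabla h_{w_i,j})+\sum_{k}\partial_{k}\phi_{\varepsilon}(h_{w_i,j})\,\nabla^2 h_{w_i,j,k}
\]
then bounds $\|\nabla^2 R^{\varepsilon}\|$ uniformly, yielding the Lipschitz gradient of $R^{\varepsilon}$.

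The synthesis term is the only delicate piece, because for $F(x)=\|G(x)-y\|^2$ with $G$ merely $C^2$ with bounded Jacobian and Hessian, $\nabla^2 F$ contains a term of the form $\sum_k (G_k(x)-y_k)\nabla^2 G_k(x)$ that grows linearly with $\|G(x)-y\|$ and is therefore not globally bounded. Here I would invoke the standing coercivity assumption on $\Psi_{\Theta,\gamma}$: the line-search criterion in Line~\ref{line_search} of Algorithm~\ref{alg:lda} guarantees monotone decrease of $\Psi_{\Theta,\gamma}^{\epst}$, so the sequence $\{\Xbf^{(t)}\}$ remains in the sublevel set $\Scal=\{\Xbf:\Psi_{\Theta,\gamma}^{\varepsilon_0}(\Xbf)\le \Psi_{\Theta,\gamma}^{\varepsilon_0}(\Xbf^{(0)})\}$, which is bounded by coercivity. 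On $\Scal$, $\|g_{\theta}([h_{w_1}(\xbf_1),h_{w_2}(\xbf_2)])-\xbf_3\|$ is bounded, and combining this with the bounded Jacobians and Hessians of $g_{\theta}$ and $h_{w_i}$ (again via the chain rule, using the $C^2$ smoothness of $\sigma$) produces a uniform bound on $\|\nabla^2 S\|$ over $\Scal$. The lemma is thus best understood as Lipschitz continuity of $\nabla\Psi_{\Theta,\gamma}^{\varepsilon}$ on any bounded sublevel set of $\Psi_{\Theta,\gamma}^{\varepsilon_0}$, which is all that the subsequent convergence argument for Theorem~\ref{theorem} actually needs.

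The main obstacle is exactly this last point: the chain-rule estimates for the regularizer and for $h_{w_i}$, $g_{\theta}$ are mechanical once the bounded derivatives of $\sigma$ and $\phi_{\varepsilon}$ are in hand, but a truly global Lipschitz bound for $\nabla S$ does not hold, so the proof must explicitly localize to a bounded sublevel set and appeal to coercivity to justify that the iterates never leave it.
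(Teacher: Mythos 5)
Your proof is correct in substance but takes a genuinely different, and more cautious, route than the paper. The paper's own proof is essentially one sentence per term: the data-fidelity and synthesis terms are dismissed by noting that $h_{w_i}$ and $g_{\theta}$ are compositions of Lipschitz continuous maps, and the smoothed $(2,1)$-term is handled by citing Lemma A2 of \cite{jimaging7110231}; no localization is made, so the lemma is asserted globally. You instead give explicit second-order estimates (the bounds $\|\nabla\phi_{\varepsilon}\|\le 1$, $\|\nabla^2\phi_{\varepsilon}\|\le 1/\varepsilon$ together with the Hessian chain rule), which makes self-contained what the paper outsources, and, more importantly, you observe that the gradient of the synthesis term $\frac{\gamma}{2}\|g_{\theta}([h_{w_1}(\xbf_1),h_{w_2}(\xbf_2)])-\xbf_3\|^2$ carries the unbounded residual factor $g_{\theta}(\cdot)-\xbf_3$, so once $g_{\theta}\circ h_{w_i}$ is nonlinear a truly global Lipschitz constant is not available; restricting to a bounded sublevel set and invoking the coercivity assumption is the right repair, and it is exactly the amount of regularity that Lemma \ref{lem:inner} and the proof of Theorem \ref{theorem} actually use, since the paper itself only establishes boundedness of the iterates there. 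One small fix to your localization step: because $\|h_{w_i}(\xbf_i)\|_{\varepsilon_{2,1}}$ increases as $\varepsilon$ decreases, monotone decrease of $\Psi^{\epst}_{\Theta,\gamma}$ within each fixed $\epst$ does not by itself keep the iterates inside $\{\Xbf:\Psi^{\varepsilon_0}_{\Theta,\gamma}(\Xbf)\le\Psi^{\varepsilon_0}_{\Theta,\gamma}(\Xbf^{(0)})\}$; you need Lemmas \ref{lem:bound} and \ref{lem:phi_decay} to conclude $\Psi^{\varepsilon_0}_{\Theta,\gamma}(\xt)\le\Psi_{\Theta,\gamma}(\xt)\le\Psi^{\varepsilon_0}_{\Theta,\gamma}(\Xbf^{(0)})+m\varepsilon_0$, which places the iterates in a slightly larger, still bounded (by coercivity), sublevel set. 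With that adjustment your argument is complete, and it supplies a rigor for the synthesis term that the paper's composition-of-Lipschitz-maps sentence does not.
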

	\begin{proof}
		Notice that $\Psi_{\Theta, \gamma}^{\varepsilon}(\Xbf)$ is the smoothing surrogate of the $\Psi_{\Theta, \gamma}(\Xbf)$ in (1) with $\|h_{w_i}(\xbf_i)\|_{2,1}$ replaced by $\|h_{w_i}(\xbf_i)\|_{\varepsilon_{2,1}}$ in the second sum. As both $h_{w_i}$ and $g_{\theta}$ are compositions of Lipschitz continuous, we know the first and last terms of $\Psi_{\Theta, \gamma}^{\varepsilon}(\Xbf)$ are Lipschitz continuous. The second sum is Lipschitz continuous proved by the Lemma A2 in \cite{jimaging7110231}.
	\end{proof}
	\begin{lemma}\label{lem:bound}For any $\varepsilon > 0$,
		$\|h_{w_i}(\xbf_i)\|_{\varepsilon_{2,1}}  \leq\|h_{w_i}(\xbf_i)\|_{2,1} 
		\le  \|h_{w_i}(\xbf_i)\|_{\varepsilon_{2,1}}  + m\varepsilon$.
	\end{lemma}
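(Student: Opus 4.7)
The plan is to reduce the double inequality to a single scalar inequality applied at each spatial index $j \in \{1, \dots, m\}$ and then sum. Writing $s_j := \|h_{w_i, j}(\xbf_i)\|$ for brevity and $\psi_\varepsilon(s) := \sqrt{s^2 + \varepsilon^2} - \varepsilon$, both sides of the statement decompose as sums $\sum_{j=1}^m \psi_\varepsilon(s_j)$ and $\sum_{j=1}^m s_j$, so it suffices to establish the pointwise bound
\begin{equation*}
\psi_\varepsilon(s) \;\le\; s \;\le\; \psi_\varepsilon(s) + \varepsilon \qquad \text{for every } s \ge 0 \text{ and } \varepsilon > 0,
\end{equation*}
and then sum over $j$; the $m\varepsilon$ slack on the right of the lemma arises precisely from the $m$ copies of the constant $\varepsilon$.

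For the left inequality I would rewrite $\sqrt{s^2+\varepsilon^2} - \varepsilon \le s$ as $\sqrt{s^2+\varepsilon^2} \le s+\varepsilon$; both sides are nonnegative, so squaring is an equivalence and the inequality reduces to $s^2 + \varepsilon^2 \le s^2 + 2 s \varepsilon + \varepsilon^2$, i.e. $0 \le 2 s \varepsilon$, which holds because $s \ge 0$ and $\varepsilon > 0$. For the right inequality I would use monotonicity of $\sqrt{\cdot}$ on the trivial bound $s^2 \le s^2 + \varepsilon^2$ to get $s \le \sqrt{s^2+\varepsilon^2} = \psi_\varepsilon(s) + \varepsilon$. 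Summing each of the two pointwise bounds over $j = 1, \dots, m$ produces the two inequalities of the lemma exactly.

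There is essentially no obstacle to this argument; the only thing to be careful about is that squaring in the first step is a valid equivalence, which is ensured by the nonnegativity of both sides. The lemma's real purpose is as a uniform control on the gap between the smooth surrogate $\Psi^{\varepsilon}_{\Theta,\gamma}$ and the original nonsmooth objective $\Psi_{\Theta,\gamma}$, and the bound $0 \le \|h_{w_i}(\xbf_i)\|_{2,1} - \|h_{w_i}(\xbf_i)\|_{\varepsilon_{2,1}} \le m\varepsilon$ is exactly what is needed later to pass to the limit $\varepsilon \to 0$ in the convergence analysis of Algorithm \ref{alg:lda} and to identify the accumulation points of the specified subsequence as Clarke stationary points of $\Psi_{\Theta,\gamma}$.
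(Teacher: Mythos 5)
Your proof is correct and is precisely the ``direct calculation'' the paper alludes to when it omits the proof of this lemma: the pointwise bound $\sqrt{s^2+\varepsilon^2}-\varepsilon \le s \le \sqrt{s^2+\varepsilon^2}$ for $s=\|h_{w_i,j}(\xbf_i)\|\ge 0$, summed over the $m$ spatial locations. Nothing is missing, and your closing remark about the role of the $m\varepsilon$ gap in the convergence analysis matches how the lemma is used in the proof of Theorem 1.
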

	%
	%
	\begin{lemma}\label{lem:inner}
		Suppose the sequence $ \{ \xt \}$ is generated by  executing Lines 3 of Algorithm 1 with fixed $ \epst = \varepsilon$ then
		\begin{enumerate}
			\item $ \| \nabla \Psi_{\Theta, \gamma}^{\varepsilon}(\xt) \| \to 0$ as $t \to \infty$.
			\item The condition in Step 4 of Algorithm 1 for reducing $\varepsilon$ can be met in finite iterations.
		\end{enumerate}
	\end{lemma}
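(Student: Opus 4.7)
The plan is to exploit the sufficient-decrease acceptance rule in Line 3 together with the Lipschitz continuity of $\nabla\Psi_{\Theta,\gamma}^{\varepsilon}$ (Lemma \ref{r_lips}) to force $\sum_t \|\nabla \Psi_{\Theta,\gamma}^{\varepsilon}(\Xbf^{(t)})\|^2 < \infty$, from which part (1) follows, and then to observe that part (2) is an immediate consequence of part (1).

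For part (1), since $\varepsilon$ is held fixed the iteration is pure gradient descent on the smooth function $\Psi_{\Theta,\gamma}^{\varepsilon}$ with step size $\alpha_t$ returned by line search, so $\Xbf^{(t+1)} - \Xbf^{(t)} = -\alpha_t\nabla\Psi_{\Theta,\gamma}^{\varepsilon}(\Xbf^{(t)})$. Coercivity of $\Psi_{\Theta,\gamma}$ (assumed at the beginning of the supplementary section) together with the uniform two-sided bound in Lemma \ref{lem:bound} implies $\Psi_{\Theta,\gamma}^{\varepsilon}$ is bounded below by some constant $\Psi^{\varepsilon}_{*}$. Telescoping the acceptance inequality yields
\begin{equation*}
\sum_{t=0}^{\infty} \tfrac{\alpha_t^2}{a}\|\nabla\Psi_{\Theta,\gamma}^{\varepsilon}(\Xbf^{(t)})\|^2 \;\le\; \Psi_{\Theta,\gamma}^{\varepsilon}(\Xbf^{(0)}) - \Psi^{\varepsilon}_{*} \;<\; \infty .
\end{equation*}
It then remains to bound $\alpha_t$ away from zero. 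Let $L$ denote the Lipschitz constant of $\nabla\Psi_{\Theta,\gamma}^{\varepsilon}$ from Lemma \ref{r_lips}; the standard descent lemma shows that any trial step $\alpha \le \frac{2a}{2+aL}$ automatically satisfies the acceptance criterion, so a backtracking line search starting from a fixed initial trial step and multiplying by a constant factor $\rho\in(0,1)$ upon rejection accepts some $\alpha_t \ge \alpha_{\min} > 0$ independent of $t$. Combining this uniform lower bound with the summability above forces $\|\nabla\Psi_{\Theta,\gamma}^{\varepsilon}(\Xbf^{(t)})\| \to 0$.

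Part (2) is then immediate: the reduction test in Step 4 of Algorithm \ref{alg:lda} reads $\|\nabla\Psi_{\Theta,\gamma}^{\varepsilon}(\Xbf^{(t+1)})\| < \sigma\eta\varepsilon$, whose right-hand side is a strictly positive constant because $\varepsilon, \sigma, \eta$ are all fixed and positive. Part (1) guarantees the left-hand side tends to zero, so the inequality must be satisfied for some finite $t$.

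The main technical obstacle is the uniform lower bound on $\alpha_t$: one must pin down the specific line-search protocol (initial trial step and reduction factor $\rho$, which the supplementary notes lists as $0.9$) and verify that the accepted step depends only on $L$, $a$, and the backtracking constants, but not on the iterate index. A minor auxiliary point is transferring coercivity from $\Psi_{\Theta,\gamma}$ to $\Psi_{\Theta,\gamma}^{\varepsilon}$, which follows at once from Lemma \ref{lem:bound} since the two functions differ by at most $m\varepsilon$ pointwise.
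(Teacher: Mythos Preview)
Your argument is correct and is precisely the standard route: sufficient decrease plus a uniform lower bound on the backtracking step (via the descent lemma and the Lipschitz constant from Lemma~\ref{r_lips}) gives square-summability of the gradient norms, hence part (1), and part (2) is then trivial. The paper itself does not supply a proof of this lemma; it simply states that the argument is ``similar as those given in'' the references \cite{chen2021learnable,jimaging7110231} and omits it, so your write-up in fact fills in what the paper leaves to the reader, and it matches the approach one finds in those cited works.
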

	\begin{lemma}\label{lem:phi_decay}
		Suppose the sequence $\{ \xt\}$ is generated by Algorithm 1 with initial $\Xbf^{(0)}$, then we have 
		$ \Psi_{\Theta, \gamma}^{\epstp}(\xtp) + m \epstp \leq \Psi_{\Theta, \gamma}^{\epst}(\xt) + m \epst$ for any $t\geq 0$.
	\end{lemma}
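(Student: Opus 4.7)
The plan is to split the bound into two steps mirroring one iteration of Algorithm~\ref{alg:lda}: the gradient-descent update $\xt \to \xtp$ at a fixed smoothing level $\epst$, and the subsequent (possible) reduction $\epst \to \epstp$. The gradient-step part is delivered directly by the line search in Line~\ref{line_search}. The $\varepsilon$-reduction part is controlled by the auxiliary claim that, for every fixed $\Xbf$, the function $\varepsilon \mapsto \Psi_{\Theta,\gamma}^{\varepsilon}(\Xbf) + m\varepsilon$ is non-decreasing in $\varepsilon > 0$. This monotonicity is the ``right'' buffer that matches the $+m\epst$ and $+m\epstp$ shifts appearing in the lemma, and it is what allows the inequality to survive even when the smoothing parameter actually shrinks.

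To prove the monotonicity, I would isolate the only $\varepsilon$-dependent part of $\Psi^{\varepsilon}$, which is the smoothed regularizer $\frac{1}{3}\sum_{i=1}^3 \|h_{w_i}(\xbf_i)\|_{\varepsilon_{2,1}}$. Using the definition of the smoothed $(2,1)$-norm, a short calculation yields
\[
\tfrac{1}{3}\sum_{i=1}^3 \|h_{w_i}(\xbf_i)\|_{\varepsilon_{2,1}} + m\varepsilon \;=\; \tfrac{1}{3}\sum_{i=1}^3 \sum_{j=1}^m \sqrt{\|h_{w_i, j}(\xbf_i)\|^2 + \varepsilon^2},
\]
since the three per-modality $-m\varepsilon$ offsets combine with the $\frac{1}{3}$ weight to cancel the $+m\varepsilon$ buffer exactly. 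Each summand on the right is non-decreasing in $\varepsilon$ (its $\varepsilon$-derivative is $\varepsilon/\sqrt{\|h_{w_i, j}(\xbf_i)\|^2 + \varepsilon^2} \geq 0$), and the remaining data-fidelity and synthesis terms do not depend on $\varepsilon$, so $\Psi^{\varepsilon}(\Xbf) + m\varepsilon$ is non-decreasing in $\varepsilon$ as claimed.

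To conclude, note that Line~\ref{reduction_cre} of Algorithm~\ref{alg:lda} sets $\epstp \in \{\epst, \eta \epst\}$, and since $0 < \eta < 1$, in either branch $\epstp \leq \epst$. Applying the monotonicity at $\Xbf = \xtp$ gives $\Psi^{\epstp}(\xtp) + m\epstp \leq \Psi^{\epst}(\xtp) + m\epst$, while the line search condition in Line~\ref{line_search} gives $\Psi^{\epst}(\xtp) \leq \Psi^{\epst}(\xt) - \frac{1}{a}\|\xtp - \xt\|^2 \leq \Psi^{\epst}(\xt)$. Chaining these two inequalities establishes the lemma for every $t \geq 0$. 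The only point that requires care is bookkeeping of the constants, so that the $\frac{1}{3}$ normalization and the three-modality sum together leave precisely a factor of $m$ in front of $\varepsilon$; otherwise the argument is a direct application of tools already in place (the line search guarantee and, as an independent sanity check, Lemma~\ref{lem:bound}).
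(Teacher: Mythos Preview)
Your proposal is correct and follows exactly the standard route the paper alludes to (the paper itself omits the proof, citing \cite{chen2021learnable,jimaging7110231}): the line-search guarantee handles the $\xt\to\xtp$ step, and the monotonicity of $\varepsilon\mapsto \Psi_{\Theta,\gamma}^{\varepsilon}(\Xbf)+m\varepsilon$ handles the $\epst\to\epstp$ reduction. Your bookkeeping of the $\tfrac{1}{3}$ weight against the three-modality sum is also correct, so the $+m\varepsilon$ buffer cancels exactly as you claim.
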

	%
	%
	The proof of \textbf{Theorem 1} is outlined below.
	\begin{proof}[Theorem 1]
		From Lemma \ref{lem:bound}, we have $ \Psi_{\Theta, \gamma}(\Xbf) \leq \Psi_{\Theta, \gamma}^{\varepsilon}(\Xbf) + m \varepsilon$ for any $\varepsilon>0$. Together with Lemma \ref{lem:phi_decay}, we get
		$
		\Psi_{\Theta, \gamma}(\xt) \leq \Psi_{\Theta, \gamma}^{\epst}(\xt) +m\epst\leq \cdots \leq \Psi_{\Theta, \gamma}^{\varepsilon_0}(\Xbf^{(0)}) +m\varepsilon_0 <\infty.
		$
		As $\Psi_{\Theta, \gamma}$ is coercive, we know that $\{ \xt\}$ is bounded. 
		%
		%
		Let $ \Xbf^{(t_l+1)}$ denote the $l$-th $ \Xbf^{(t)}$ that satisfies the reduction criterion in step 4 of Algorithm 1. Then we can partition the whole sequence $\{  \Xbf^{(t)} \}$ into segments correspondingly such that the associated $\varepsilon_t = \varepsilon_{t_{l} +1} = \varepsilon_0 \eta^l$ for $t=t_{l} +1,\dots,t_{l+1} $ in the $l$-th segment, and the length of each segment is bounded according to Lemma \ref{lem:inner}.
		As $ \Xbf^{(t_l+1)}$ satisfies the reduction criterion  in step 4 of Algorithm 1, we have $ \norm{\nabla \Psi_{\Theta, \gamma}^{\varepsilon_{t_l}} (\Xbf^{(t_l+1)})} \leq \sigma  \varepsilon_{t_l} \eta= \sigma \varepsilon_{0} \eta^{l+1} \to 0$ as $l\to \infty$.  
		Then, there exists at least one convergent subsequence of $\Xbf^{(t_l+1)}$, dubbed $\{\Xbf^{(k + 1)}\}$, and a point $\hat{\Xbf}$ that satisfies $ \Xbf^{(k + 1)} \to \hat{\Xbf}$, $ \varepsilon_k \to 0,$ and $ \nabla \Psi_{\Theta, \gamma}^{\varepsilon_{k}}(\Xbf^{(k + 1)}) \to 0$ as $ k\to \infty$, where $\varepsilon_{k}$ is the corresponding $\varepsilon_{t_l}$ associated with $\Xbf^{(k + 1)}$. 
		Denote $\Xbf = \{\xbf_1, \xbf_2, \xbf_3 \}$. 
		The Clark subdifferential of each $\xbf_i$ is identical to \cite{jimaging7110231} expect for an additional smooth term in (1), so the analysis for each individual $\xbf_i$ is the same as \cite{jimaging7110231}.
		It has been proved in \cite{jimaging7110231} that $
		\mathrm{dist}( \nabla \Psi_{\Theta, \gamma}^{\varepsilon_{k}}(\xbf^{(k + 1)}_i), \partial^C \Psi_{\Theta, \gamma}(\hat{\xbf}_i)) \to 0,
		$ as $k \to \infty$, where $\partial^C$ denotes the Clark subdifferential. As this holds for each $\xbf_i$, then we can get it also holds for $\Xbf$ that $
		\mathrm{dist}( \nabla \Psi_{\Theta, \gamma}^{\varepsilon_{k}}(\Xbf^{(k + 1)}), \partial^C \Psi_{\Theta, \gamma}(\hat{\Xbf})) \to 0,
		$ as $k \to \infty$. Since $ \nabla \Psi_{\Theta, \gamma}^{\varepsilon_{k}}(\Xbf^{(k + 1)}) \to 0$ and $\partial^C \Psi_{\Theta, \gamma}(\hat{\Xbf})$ is closed, we conclude that $0 \in \partial^C \Psi_{\Theta, \gamma}(\hat{\Xbf})$.
	\end{proof}
	\begin{table}[htb]
		\centering
		\caption{The hyper-parameter selection. The parameters for Algorithm 1 are determined after trials by considering both algorithm convergence and the computational efficiency. The batch size is determined due to the consideration of the GPU (Nvidia GTX-1080Ti) memory and the data volume. For Algorithm 2, the selection of $\delta_{tol}$ makes the algorithm stop at around 1000 epochs. }\label{parameter}
		\begin{tabular}{|c|c|c|c|c|c|c|c|}
			\hline
			Initializer & Xavier  \cite{glorot2010understanding}  & $l$ & 4 & $\eta$ & 0.5 & $\etol$ & $1\times 10^{-3}$\\ 
			\hline
			Optimizer & ADAM \cite{kingma2014adam} & kernel of $h_{w_i}$ & $3 \times 3 \times 64$ & $\varepsilon_0$ & 0.001 & $\nu_\delta$ & $0.95$\\
			\hline
			Learning Rate & $0.001$ & $l'$ & 6 & $a$ & $10^5$& $\delta$ & $1 \times 10^{-3}$ \\
			\hline
			batch size & 2 & kernel of $g_{\theta}$ & $3 \times 3 \times 128$ & $\sigma$ & $ 10^3$ & $\lambda$ & $1 \times 10^{-4}$ \\
			\hline
			$\mu$ & 0.1 & $\alpha_0$ & $0.01$&  $T$ & $ 11$ & $\nu_\lambda$ & 1.001  \\
			\hline
			$\gamma$ & $1$ & $\rho_{\Theta}^{(0)}$ &0.9&  $\delta_{tol}$& $ 4.35\times 10 ^ {-6}$ & $\rho_\gamma$ & $0.9$ \\
			\hline
		\end{tabular}
	\end{table}
	\begin{figure}[!h]
		\flushleft
		\includegraphics[width=0.03\linewidth]{fig/smallwhite.png}
		\includegraphics[width=0.16\linewidth]{fig/MM.png}
		\includegraphics[width=0.16\linewidth]{fig/MM-GAN.png}
		\includegraphics[width=0.16\linewidth]{fig/MMGradAdv.png}
		\includegraphics[width=0.16\linewidth]{fig/Hi-Net.png}
		\includegraphics[width=0.16\linewidth]{fig/Proposed.png}\\
		\includegraphics[width=0.03\linewidth]{fig/FLAIR.png}
		\includegraphics[width=0.16\linewidth]{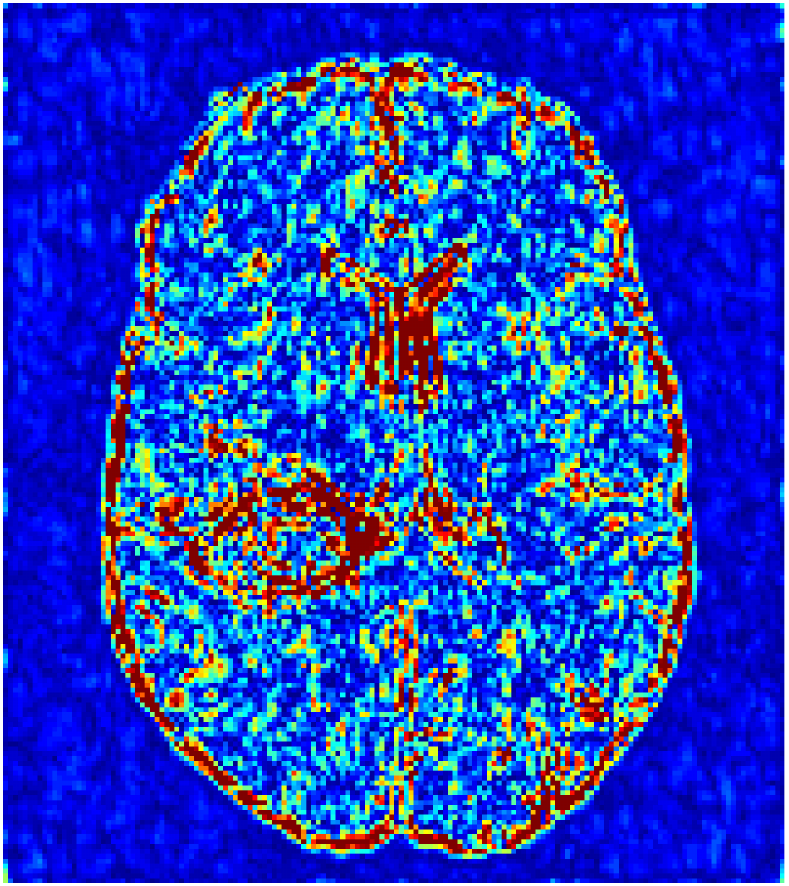}
		\includegraphics[width=0.16\linewidth]{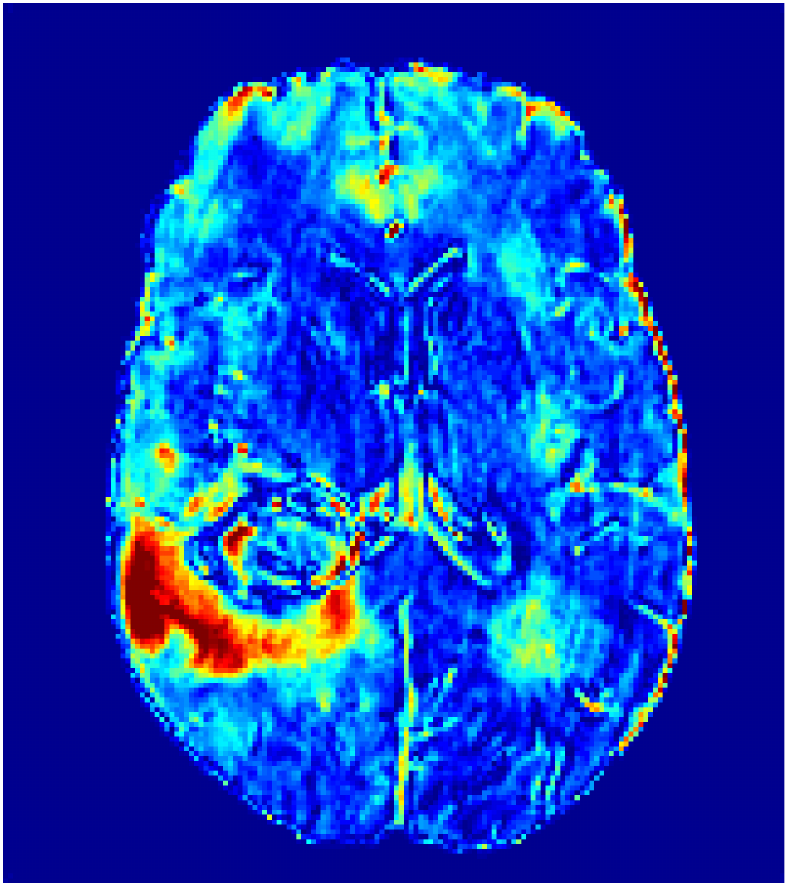}
		\includegraphics[width=0.16\linewidth]{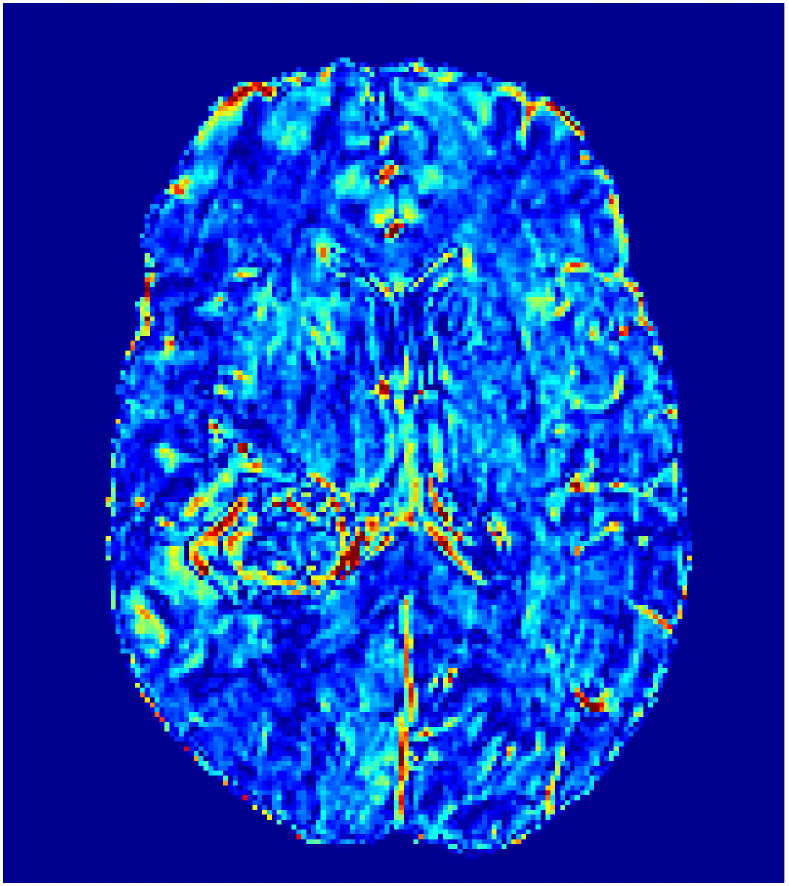}
		\includegraphics[width=0.16\linewidth]{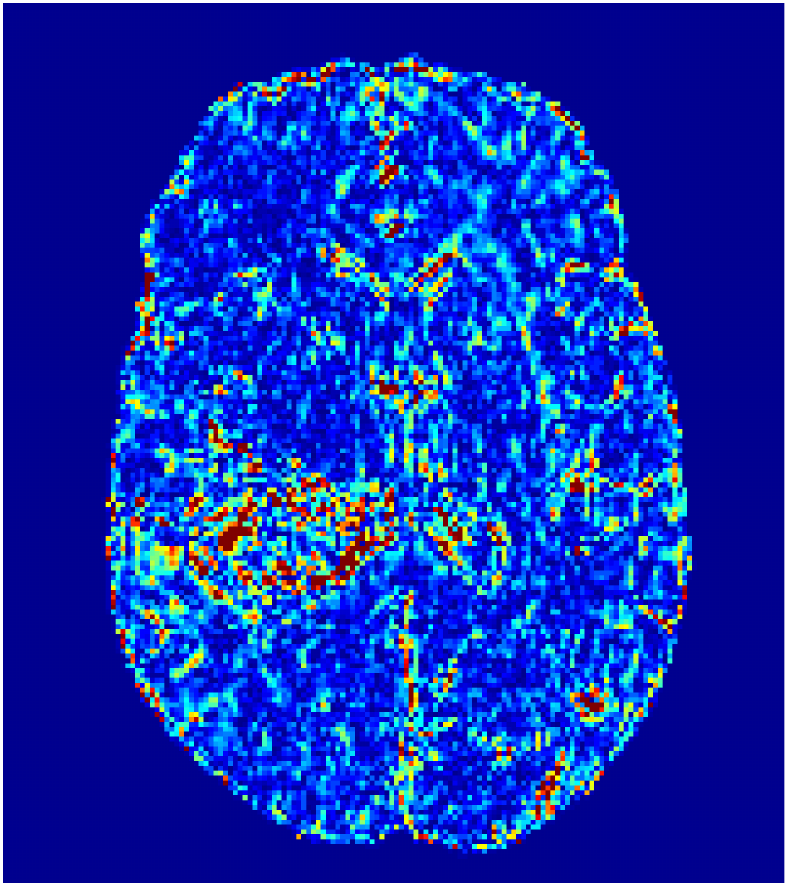}
		\includegraphics[width=0.16\linewidth]{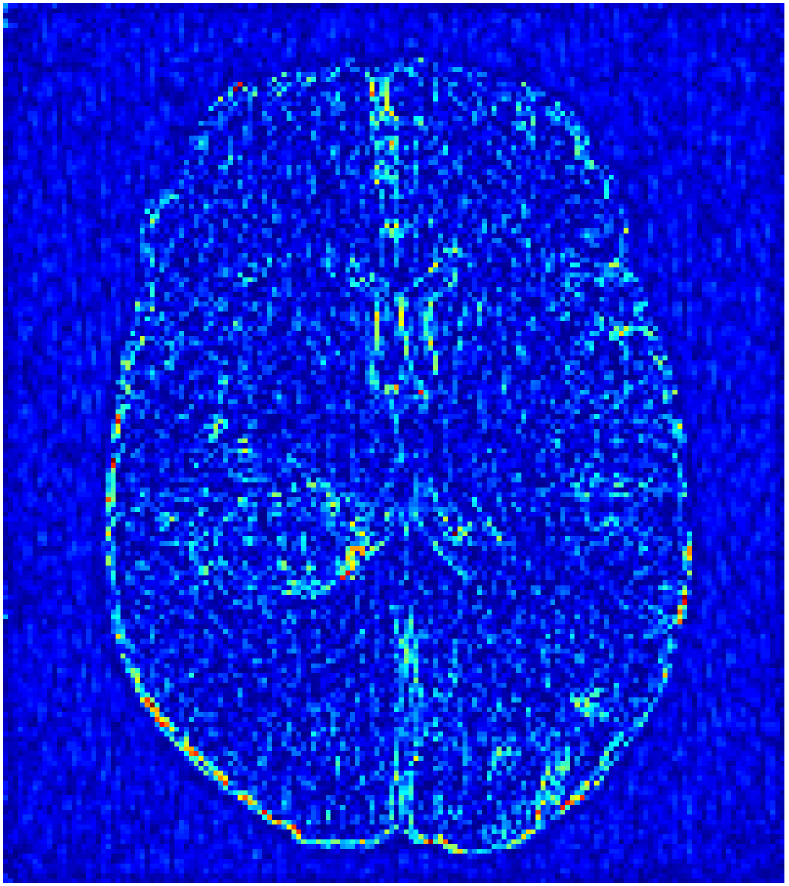}
		\includegraphics[width=0.065\linewidth]{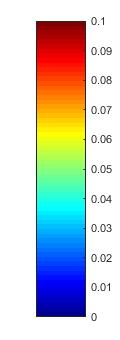}\\
		\includegraphics[width=0.03\linewidth]{fig/T2.png}
		\includegraphics[width=0.16\linewidth]{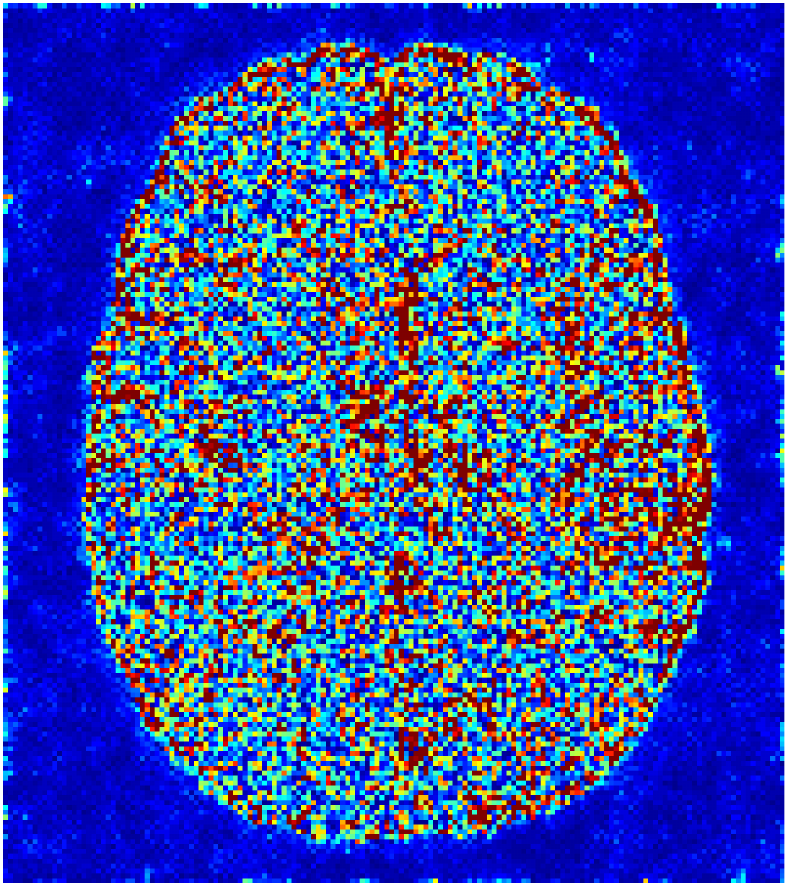}
		\includegraphics[width=0.16\linewidth]{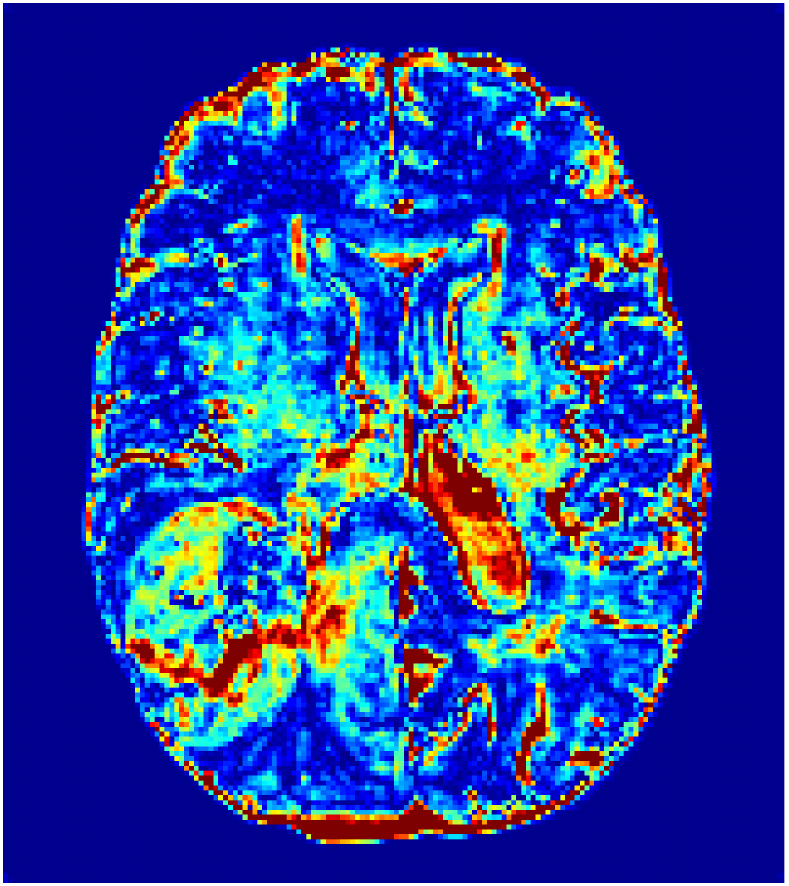}
		\includegraphics[width=0.16\linewidth]{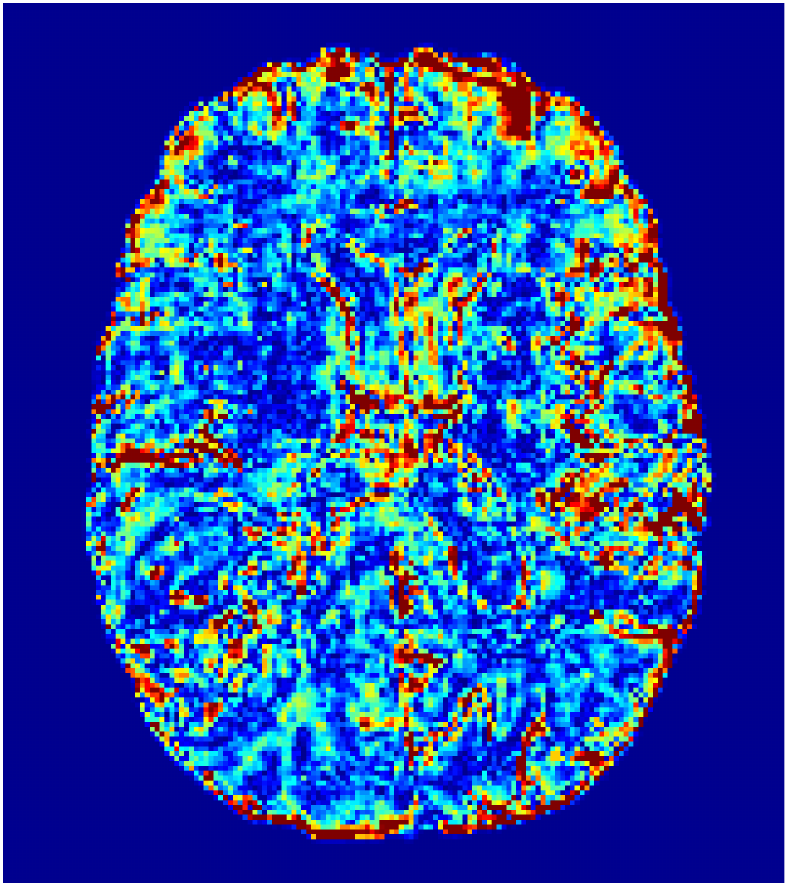}
		\includegraphics[width=0.16\linewidth]{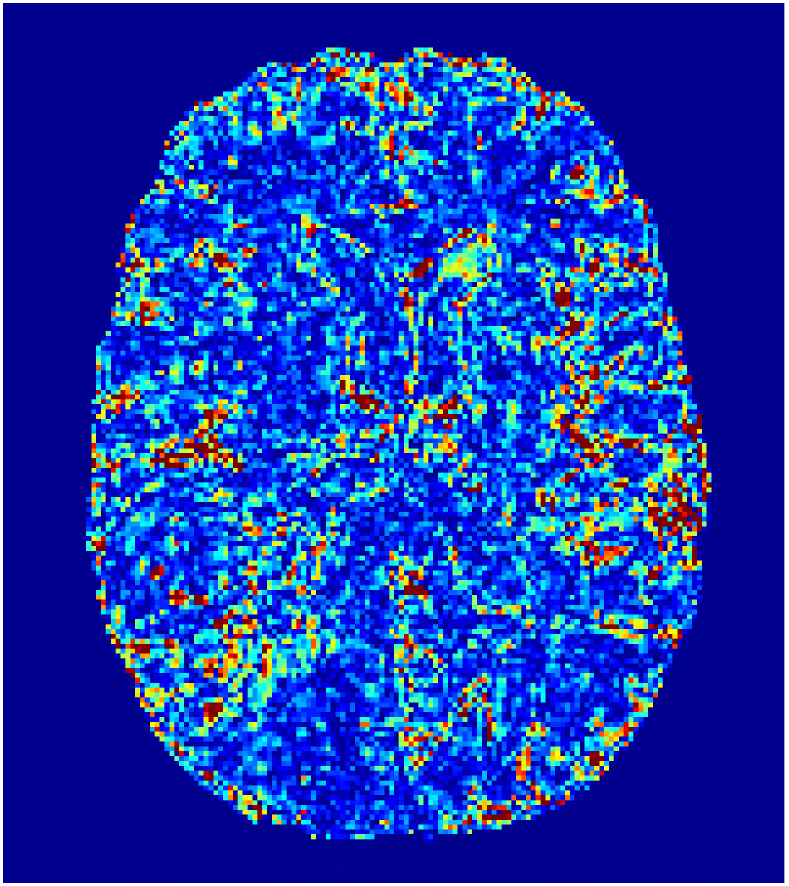}
		\includegraphics[width=0.16\linewidth]{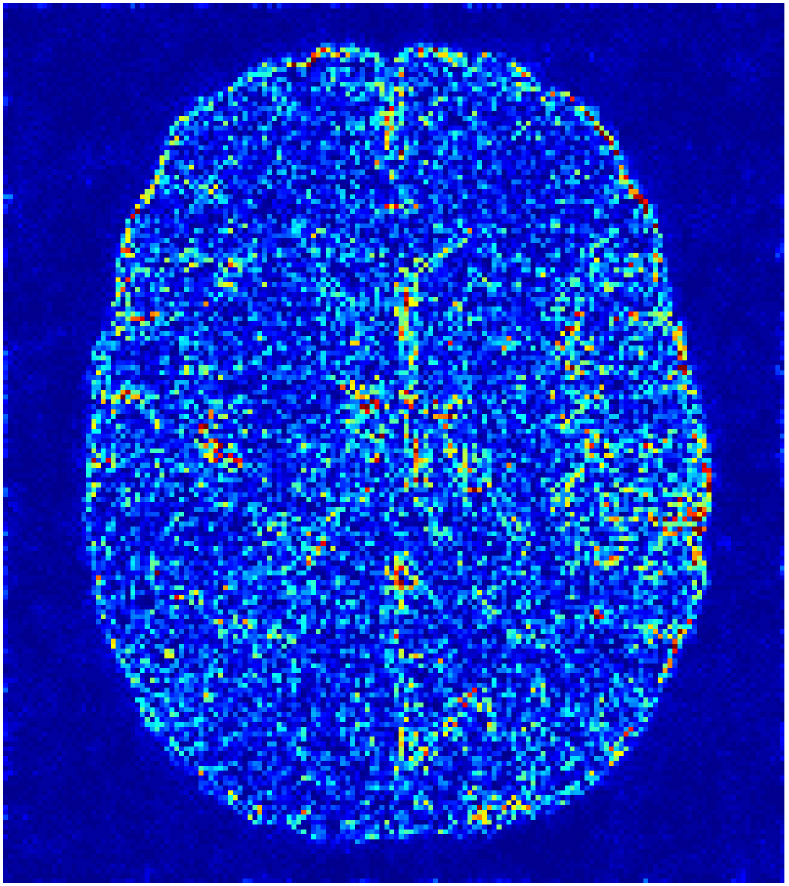}\\
		\includegraphics[width=0.03\linewidth]{fig/T1.png}
		\includegraphics[width=0.16\linewidth]{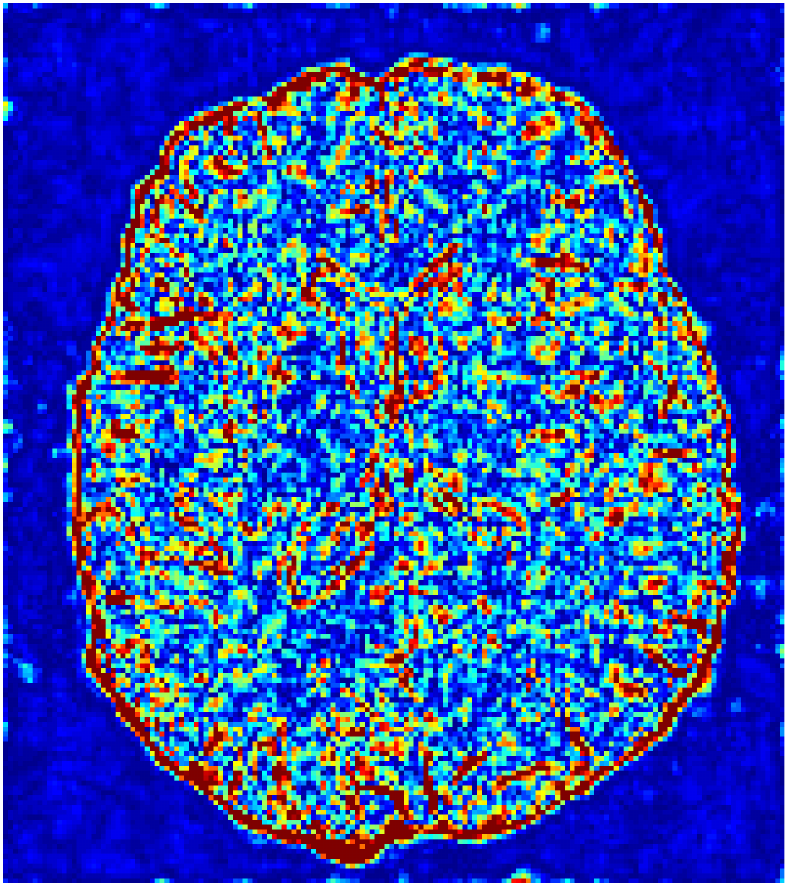}
		\includegraphics[width=0.16\linewidth]{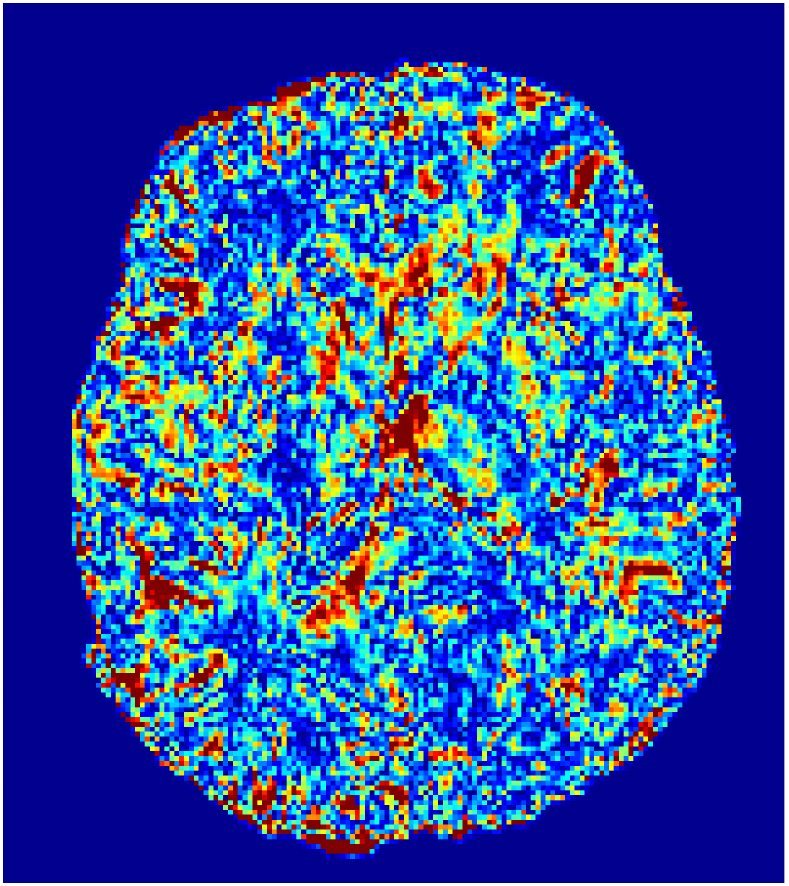}
		\includegraphics[width=0.16\linewidth]{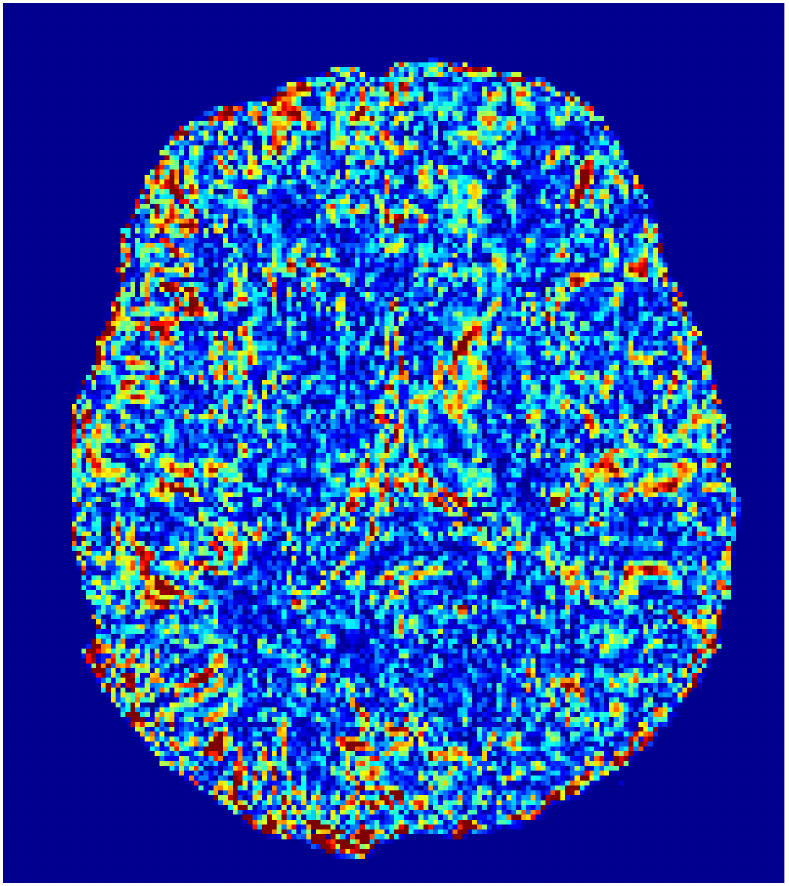}
		\includegraphics[width=0.16\linewidth]{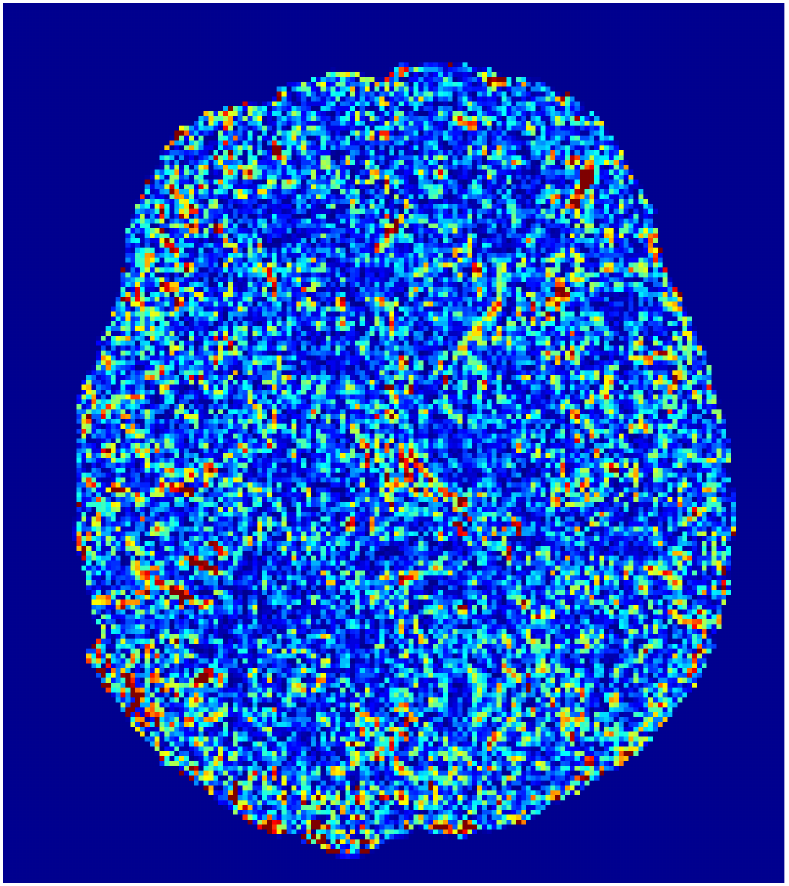}
		\includegraphics[width=0.16\linewidth]{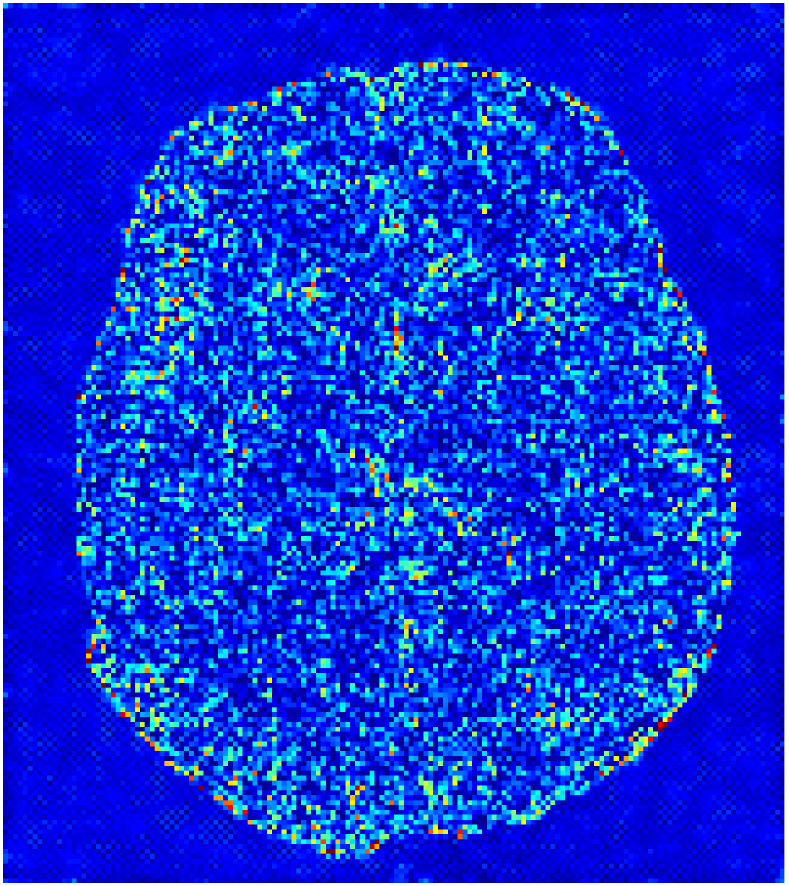}\\
		\includegraphics[width=0.03\linewidth]{fig/T1CE.png}
		\includegraphics[width=0.16\linewidth]{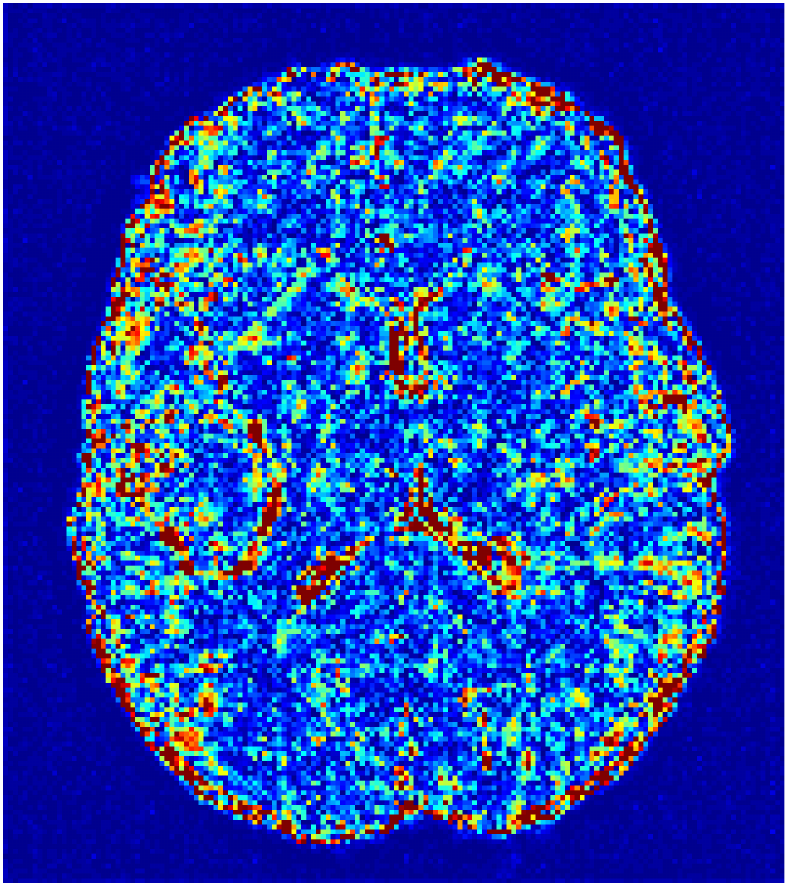}
		\includegraphics[width=0.16\linewidth]{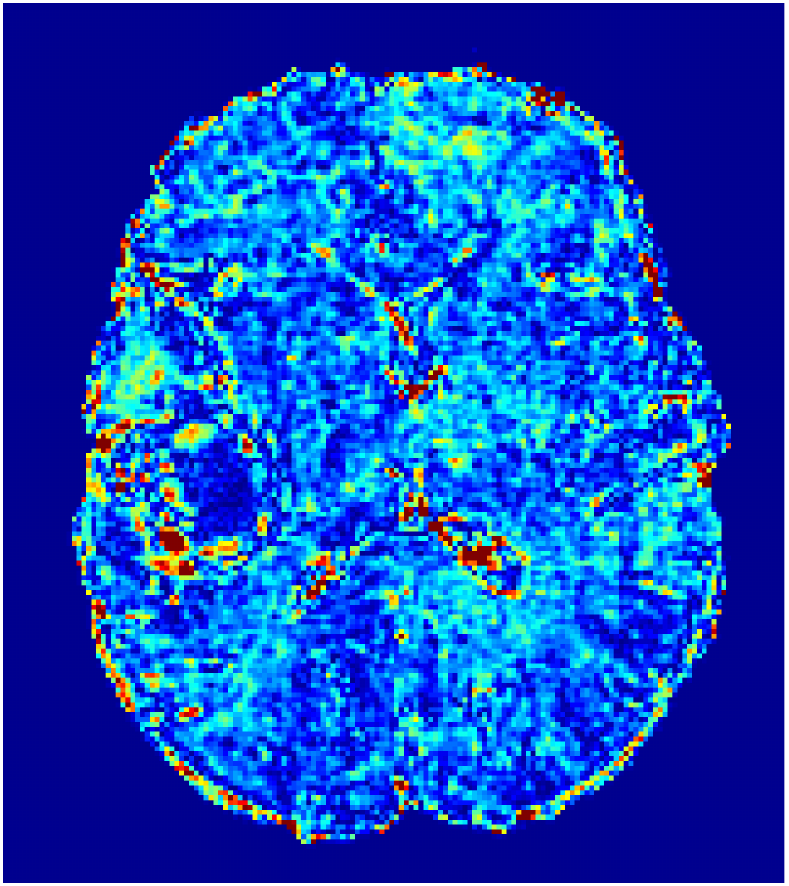}
		\includegraphics[width=0.16\linewidth]{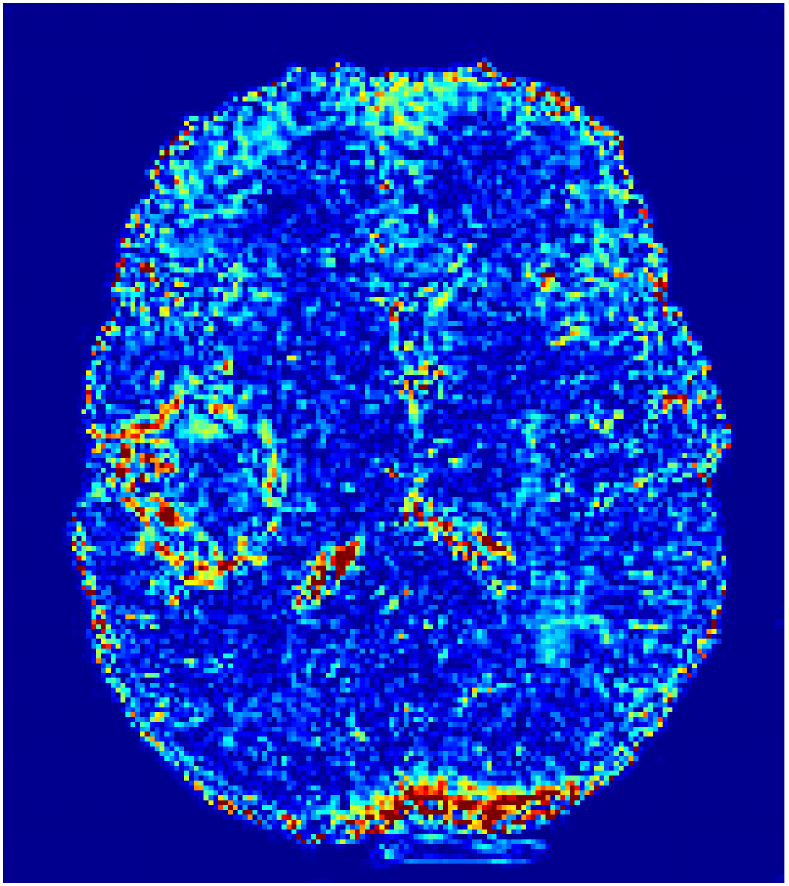}
		\includegraphics[width=0.16\linewidth]{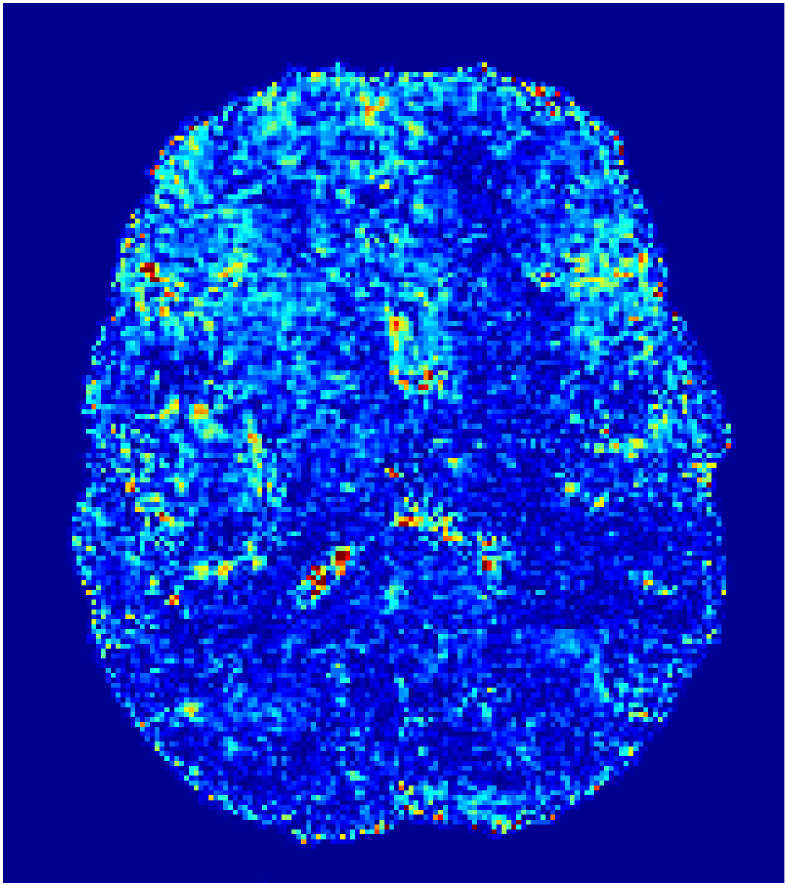}
		\includegraphics[width=0.16\linewidth]{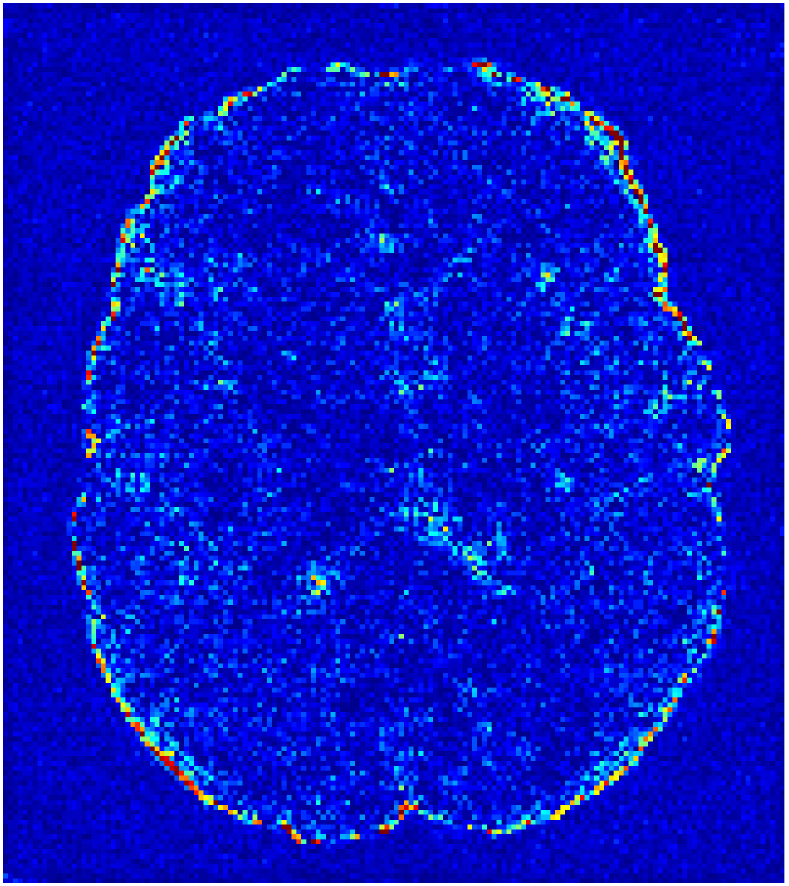}\\
		\caption{Pointwise error maps between synthetic image and the its corresponding ground truth. From first row to last row: T1 $+$ T2 $\to$ FLAIR, T1 $+$ FLAIR $\to$ T2, T2 $+$ FLAIR $\to$ T1 and T1 $+$ T2 $\to$ T1CE.  }\label{fig:synthesis_error}
	\end{figure} 

\end{document}